\DeclarePairedDelimiter\pars{\lparen}{\rparen}
\newcommand{\bigotilde}[1]{\tilde{\mathcal{O}}\pars*{#1}}
\newcommand{\bigo}[1]{\mathcal{O}\pars*{#1}}
\newtheorem{prob}{Problem}
\crefname{prob}{problem}{problems}
\newtheorem{theorem}{Theorem}
\newtheorem{lemma}{Lemma}
\newtheorem{conjecture}{Conjecture}
\title{
    Parameterized quantum algorithms for closest string problems
}
\author{
Josh Cudby\thanks{Department of Applied Mathematics and Theoretical Physics, University of Cambridge, UK ({jjcc2@cam.ac.uk}).}
\and Sergii Strelchuk\thanks{Department of Computer Science, University of Oxford, UK.}
}
\date{\today}
\begin{document}

\maketitle

\begin{abstract}
Parameterized complexity enables the practical solution of generally intractable $\NP$-hard problems when certain parameters are small, making it particularly useful in real-world applications.
The study of string problems in this framework has been particularly fruitful, yielding many state-of-the-art classical algorithms that run efficiently in certain parameter regimes contrary to their worst- or average-case performance.
Motivated by the dramatic increase in genomic data and its growing computational demands, we initiate the study of the \emph{quantum} parameterized complexity of the Closest String Problem (CSP) and the related Closest Substring Problem (CSSP). 
We present three quantum algorithms for the CSP and one for the CSSP. 
Each algorithm demonstrates improved performance over classical counterparts in specific parameter regimes, highlighting the promise of quantum approaches in structured combinatorial settings.
We also derive a conditional lower bound for the CSP with binary alphabets, showing that our first algorithm is tight in its dominant scaling factor.
\end{abstract}



\section{Introduction}
\label{sec:intro}
String processing is one of the oldest and most fundamental fields within theoretical computer science, but remains an active area of research.
Recent developments have largely been motivated by rapidly growing quantities of data that are available.
In particular, modern applications in bioinformatics have driven a surge in interest, with the volume of DNA data available growing extremely rapidly~\cite{sayers_genbank_2023}. Indeed, since the dawn of modern sequencing technology, the cost of acquiring new sequences has dropped faster than Moore’s Law for almost two decades~\cite{wetterstrand_dna_nodate}.
The falling cost, combined with recent global pandemics, has prompted massive amounts of viral sequencing and led to a huge data backlog.
The ability to gain meaningful insights from data requires faster algorithms.

Specifically, given a set of strings (often representing DNA, mRNA, or protein sequences), we aim to find a new string that is ``close'' to all inputs, or to a substring of each input.
These tasks correspond to the Closest String Problem (CSP) and the Closest Substring Problem (CSSP)~\cite{kevin_lanctot_distinguishing_2003}, where closeness is measured by distance metrics such as the Hamming or Levenshtein distances~\cite{levenshtein_binary_1965}.
The solution can play the role of a \emph{consensus} string for the inputs, acting as a proxy for a full multiple sequence alignment (MSA).
MSA is the process of aligning 3 or more sequences to identify regions of similarity that generally correspond to functional or structural similarity.
A complete MSA procedure is often prohibitively computationally expensive, requiring $\Omega(n^k)$ time to align $k$ strings of length $n$.
A consensus string can be cheaper to compute while still being useful as a representation of an ``average case'' organism, such as in the study of DNA binding sites~\cite{day_critical_1992}.
Indeed, one MSA heuristic is the ``center star'' method.
In this algorithm, each input sequence is iteratively aligned against a putative center string, and the final alignment is then computed from the results~\cite{zou_novel_2012}.
The quality of the final alignment is closely tied to that of the center string used; using the solution to the CSP guarantees a good alignment.

Both problems are $\NP$-hard -- even over binary alphabets under Hamming distance -- yet the lens of parameterized complexity has enabled more nuanced algorithmic approaches~\cite{gramm_fixed-parameter_2003, nicolas_hardness_2005, lokshtanov_slightly_2018, maji_listing_2015, cheng_efficient_2004, wang_efficient_2009}.  
We aim to obtain a fine-grained understanding of the hardness of problems, and in particular to identify certain structural parameters, such as the number of input sequences or a threshold for the allowable distance of a solution, that can lead to more efficient algorithms.
The choice of metric can significantly impact the complexity of the task. 
For example, the CSP with a fixed number of input strings is known to be solvable in time polynomial in the length of the strings if the distance is measured by the Hamming distance~\cite{gramm_fixed-parameter_2003}. 
Conversely, it is known that no such polynomial-time algorithm can exist when distance is measured by Levenshtein distance~\cite{nicolas_hardness_2005} unless $\P = \NP$.
A careful study of these problems, across different parameter regimes and various metrics, is therefore warranted.

In practical tasks that arise in genomic sequencing and alignment, the inputs processed by string‐matching and alignment routines almost never exhibit the pathological structure required to achieve worst‐case runtimes. 
The average‐case performance, though often more indicative of typical behaviour, is critically dependent on our ability to accurately characterise structured biological data distributions. This is a highly non-trivial task for the diverse and complex probabilistic ensembles representing genomic sequences, which often exhibit specific biases and correlations. Genomic data, for example, carries highly constrained patterns of conservation and repetition arising from evolutionary pressures, and even noisy reads tend to cluster around a small number of high-quality reference loci. As a result, neither measure alone sufficiently represents the practical utility of an algorithm, especially when specific structural parameters inherent to the data (e.g. degree of sequence similarity, the density of conserved regions, or the frequency of repeating k-mers) are present. On the other hand, a parameterized framework that explicitly quantifies these instance-specific measures therefore provides a far more accurate predictor of real-world performance, leading to better heuristics for the corresponding algorithms.

For almost all organisms, genetic material comes in the form of DNA which is composed of four bases \{A, C, G, T\}. Through transcription to mRNA and translation to proteins, these sequences govern cellular behaviour and evolution.
Thus, DNA is the fundamental language of life and understanding how DNA structures vary within and between species plays a central role in modern approaches to human health and medicine.

A key bioinformatics challenge is to identify sequences that are conserved across related organisms. This problem has a variety of important applications, ranging from drug design, species comparison to disease tracking.
In this paper, we study the optimisation problems that correspond to this task, and related tasks such as motif discovery, haplotype inference and primer design in polymerase chain reactions.

Recently, rapid progress in both the theory and hardware of quantum computers has led to the development of quantum algorithms for string processing.
Notable early works include sequence comparison~\cite{hollenberg_fast_2000} and exact pattern matching~\cite{ramesh_string_2003}, whereas recent works have targeted approximate pattern matching~\cite{kociumaka_communication_2024}, seed-and-extend alignment~\cite{prousalis_novel_2024} and \emph{de novo} assembly~\cite{fang_divide-and-conquer_2024}.
A wealth of other results are available in this active field~\cite{ montanaro_quantum_2015,aaronson_quantum_2019, niroula_quantum_2021, sarkar_qibam_2021, childs_quantum_2022, darbari_quantum_2022, jin_quantum_2022, akmal_near-optimal_2023, gall_quantum_2023, gibney_near-optimal_2023}.

In this work, we develop quantum algorithms for the CSP and CSSP.
We discuss their parameterized complexity with respect to alphabet size, number of input strings and threshold distance.
Our contributions are formalised in the following propositions, which provide quantum algorithms for the CSP and the CSSP under various metrics and parameter settings.
In each case, the alphabet size is $\sigma$, the number of input strings is $k$, and the threshold of allowed distance for solutions is $d$.
\begin{restatable}[CSP, any distance]{prop}{csptrivial}
    \label{prop:csp_trivial}
    There exists a quantum algorithm for solving the CSP under any desired distance metric, with runtime $\bigotilde{\sigma^{n/2}\cdot\sqrt{k}\cdot D(n)}$, where $D(n)$ is the \emph{classical} time required to compute the distance between 2 strings of length $n$. In particular, $D(n) = \bigo{n}$ for the Hamming distance and $\bigo{n^2}$ for the Levenshtein distance.
\end{restatable}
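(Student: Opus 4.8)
The plan is to reduce the CSP to a nested Grover search, with the outer search ranging over candidate solution strings and the inner search verifying a candidate. Fix the threshold $d$; the decision-and-search version of the CSP asks for a string $s\in\Sigma^n$ (where $|\Sigma|=\sigma$) with $\max_{i\in[k]}\mathrm{dist}(s,t_i)\le d$, for input strings $t_1,\dots,t_k$ and the chosen metric $\mathrm{dist}$. The optimisation version (minimise the largest distance) follows by binary search over $d$, which ranges over $\mathcal{O}(\mathrm{poly}(n))$ values for both the Hamming and Levenshtein distances, costing only a logarithmic factor that is absorbed into the $\bigotilde{\cdot}$. The candidate space $\Sigma^n$ has size $\sigma^n$, so a Grover search over it contributes the dominant $\sqrt{\sigma^n}=\sigma^{n/2}$ factor, provided we can implement an oracle that recognises valid $s$.

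The oracle is itself a Grover search. For a basis state $s$, deciding whether $\mathrm{dist}(s,t_i)\le d$ for all $i$ is the same as deciding that there is no \emph{witness} index $i$ with $\mathrm{dist}(s,t_i)>d$. Assuming quantum query access to the inputs — concretely, that $t_i$ can be loaded from a QRAM given $i$ in time $\bigotilde{n}$ — each such test costs $\bigotilde{D(n)}$: load $t_i$ and run the classical distance routine as a reversible circuit. An inner Grover search over the $k$ indices for a witness therefore runs in $\bigo{\sqrt{k}}$ iterations at total cost $\bigotilde{\sqrt{k}\cdot D(n)}$, and $s$ is valid exactly when no witness is reported. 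Here $D(n)=\bigo{n}$ for the Hamming distance and $D(n)=\bigo{n^2}$ for the Levenshtein distance via the standard dynamic program.

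Composing the two searches is the one delicate point, and I expect it to be the main obstacle: the inner search is a bounded-error subroutine, so it cannot be substituted verbatim as the oracle of the outer search. This is dealt with either by invoking the analysis of quantum search on bounded-error inputs (H{\o}yer, Mosca and de Wolf), or, more elementarily, by amplifying the inner success probability to $1-\sigma^{-\Omega(n)}$ via $\bigo{n\log\sigma}$ repetitions with a majority vote, so that a union bound over the $\bigo{\sigma^{n/2}}$ outer iterations still leaves constant overall success probability; the overhead is polylogarithmic in $\sigma^n$ and absorbed into $\bigotilde{\cdot}$. If the number of valid strings is unknown in advance, the Boyer--Brassard--H{\o}yer--Tapp variant of Grover search keeps the outer cost at $\bigo{\sigma^{n/2}}$ iterations. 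Multiplying the outer iteration count by the boosted oracle cost gives the claimed runtime $\bigotilde{\sigma^{n/2}\cdot\sqrt{k}\cdot D(n)}$ — a square-root improvement, in both the search space and the number of strings, over the naive classical $\bigo{\sigma^{n}\cdot k\cdot D(n)}$ brute force. Apart from this composition step, everything is routine bookkeeping.
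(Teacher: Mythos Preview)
Your proposal is correct and follows essentially the same nested-search structure as the paper: an outer quantum search over $\Sigma^n$ wrapping an inner quantum search over the $k$ input indices, with the distance computed classically at cost $D(n)$. The only cosmetic difference is that the paper uses D\"urr--H{\o}yer minimum/maximum finding directly (outer $\min$ over $s$ of the inner $\max_i d(s,s_i)$), which avoids your explicit binary search over the threshold; your decision-plus-binary-search formulation and your more careful handling of bounded-error composition are equally valid and give the same $\bigotilde{\cdot}$ bound.
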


\begin{restatable}[CSP, Hamming distance]{prop}{csphamming}
\label{prop:csp_hamming}
    There exists a quantum algorithm for solving the CSP under the Hamming distance, with runtime $\bigotilde{kn + k^2 \cdot d^{(d+5)/2} }$.
\end{restatable}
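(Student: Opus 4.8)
The plan is to quantize the classical fixed-parameter bounded-search-tree algorithm of Gramm, Niedermeier and Rossmanith for the CSP under Hamming distance, using Montanaro's quantum speedup for backtracking algorithms. First I would recall the classical search tree. A node is a pair $(\hat{s},\Delta)$ consisting of a candidate center $\hat{s}$ and a remaining budget $\Delta$, with root $(s_1,d)$. If $d_H(\hat{s},s_i)\le d$ for all $i$ the node is marked (a valid center has been found); if $\Delta=0$ but some input is still too far, the node is a dead leaf; otherwise one picks any input $s_i$ with $d_H(\hat{s},s_i)>d$, selects any $d+1$ positions on which $\hat{s}$ and $s_i$ disagree, and creates $d+1$ children, the $j$-th obtained by overwriting the $j$-th selected position of $\hat{s}$ with the corresponding symbol of $s_i$ and decrementing $\Delta$. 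Correctness of the branching is the pigeonhole fact that any center within distance $d$ of $s_i$ must agree with $s_i$ on at least one of the $d+1$ selected positions. The resulting tree has depth at most $d$ and out-degree at most $d+1$, hence at most $(d+1)^d=\bigo{d^d}$ nodes, and it contains a marked node iff the instance is a yes-instance.

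Next I would make the tree ``backtracking-ready'' in the sense required by Montanaro's algorithm: equip it with a root, a child-listing routine, and a predicate flagging marked and dead nodes, all computable cheaply. After a one-time $\bigotilde{kn}$ preprocessing step — reading the input and computing $d_H(s_1,s_i)$ for every $i$, together with, for each $i$, enough of the sorted list of positions on which $s_1$ and $s_i$ disagree — a node is represented by the sequence of at most $d$ edits along its root-to-node path. Walking down this path while maintaining the profile $\big(d_H(\hat{s},s_i)\big)_{i\in[k]}$ is cheap, since each single-symbol edit changes each entry by an amount in $\{-1,0,1\}$ computable in $\bigo{1}$; from the profile one finds a violated input in $\bigo{k}$, and a suitable set of $d+1$ disagreeing positions is extracted from the precomputed difference lists (adjusted for the $\le d$ already-edited positions). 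Thus the per-node work is $\mathrm{poly}(k,d)$, and each such classical subroutine is turned into a reversible unitary in the standard way (compute into ancillae, uncompute garbage).

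Finally I would invoke Montanaro's backtracking algorithm with the bound $T\le (d+1)^d$ on the number of nodes and $d$ on the depth: it detects — and, via its search variant (or the usual decision-to-search reduction), outputs — a marked node using $\bigotilde{\sqrt{T}\cdot\mathrm{poly}(d)}$ calls to the per-node subroutine. Since $\sqrt{T}=\bigo{d^{d/2}}$, multiplying by the $\mathrm{poly}(k,d)$ per-node cost and Montanaro's polynomial-in-depth overhead, and adding the $\bigotilde{kn}$ preprocessing, gives total runtime $\bigotilde{kn+k^2 d^{(d+5)/2}}$; the precise polynomial prefactor is exactly what this accounting pins down.

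I expect the main obstacles to be that very accounting: (i) verifying that extracting $d+1$ disagreeing positions at each node really costs only $\mathrm{poly}(k,d)$ rather than $\bigo{n}$, which is the reason the preprocessing must store difference lists and not merely distances; and (ii) resisting the tempting but flawed alternative of nesting amplitude amplification over the depth-$d$ recursion directly — there the per-level failure probabilities compound multiplicatively, forcing $\mathrm{polylog}$-many repetitions at each of the $d$ levels and hence an overhead exponential in $d$. Running Montanaro's single quantum walk over the whole tree is precisely what sidesteps this blow-up, and getting the reversibility/uncomputation of the per-node oracle right is the remaining routine-but-nontrivial piece.
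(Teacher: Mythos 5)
Your proposal is correct, and it reaches the same bound, but it takes a genuinely different quantum route from the paper. The paper also quantizes the Gramm--Niedermeier--Rossmanith search tree, but it does so by hand: it defines a bespoke lazy, reversible Markov chain on the nodes of the perfect $(d+1)$-ary tree, computes its stationary distribution explicitly, bounds its eigenvalue gap via canonical paths and congestion (obtaining $\delta(P) > 2^{-9}d^{-2}$ and a marked-node mass $\epsilon \geq 1/(2d(d+1)^d)$), and then plugs these into the Magniez--Nayak--Roland--Santha quantum-walk-search cost $S + \tfrac{1}{\sqrt{\epsilon}}\big(\tfrac{1}{\sqrt{\delta}}U + C\big)$ with setup/update/check costs $S = \tilde{\mathcal{O}}(k^2 d^2)$, $U = C = \tilde{\mathcal{O}}(k^2 d)$. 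You instead invoke Montanaro's quantum backtracking theorem, which internalises exactly this kind of walk-on-a-tree analysis (via Belovs' electrical-network construction) and hands you $\tilde{\mathcal{O}}(\sqrt{T}\cdot\mathrm{poly}(n))$ queries with $T \leq (d+1)^d$, depth $n = d$. The two routes give the same $k^2 d^{(d+5)/2}$ scaling once the per-node oracle cost $\tilde{\mathcal{O}}(k^2 d)$ is folded in. What the paper's approach buys is a self-contained, elementary derivation that exposes exactly where the $d^{5/2}$ polynomial overhead comes from (the $d^{(d+1)/2}$ from $\epsilon^{-1/2}$ times the $d$ from $\delta^{-1/2}$ times the $d$ from the per-node cost); what yours buys is economy, since the spectral-gap and congestion analysis is delegated to Montanaro's theorem. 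One small caveat worth pinning down if you write this up: Montanaro's bound for \emph{detecting} a marked node is $\tilde{\mathcal{O}}(\sqrt{Tn})$, but the CSP needs you to \emph{output} a solution, so you must use the finding variant (or the decision-to-search reduction), which carries an extra $\mathrm{poly}(n)$ in the depth; with $n = d$ this is exactly where the additional polynomial factors of $d$ appear, and your final exponent $(d+5)/2$ correctly accounts for it. Your observations about preprocessing to avoid $\mathcal{O}(n)$-cost position extraction and about the failure of naively nesting amplitude amplification are both correct and match the role of the paper's $\mathcal{O}(kn)$ preprocessing and its choice of a single global walk.
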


\begin{restatable}[CSP, Levenshtein distance]{prop}{csplevenshtein}
\label{prop:csp_levenshtein}
    There exists a quantum algorithm for solving the CSP under the Levenshtein distance, with runtime $\bigotilde{\sigma \cdot 2^k \cdot T(n)^{2k}}$, where $T(n)$ satisfies $n/e \leq T(n) < n$ for every $n$.
\end{restatable}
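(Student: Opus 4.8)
The plan is to reduce the problem to a search over a bounded-size candidate set and then apply Grover search. First I would argue that, in an optimal solution, the center string $s$ has length at most some $T(n) < n$: since each input string has length at most $n$, and the Levenshtein distance of $s$ to a string of length $m$ is at least $||s| - m|$, a string of length much larger than $n$ incurs large distance to every input; a careful version of this argument (balancing insertions/deletions against the most favourable length) should give the claimed bound $n/e \le T(n) < n$ — this is essentially the observation that an optimal consensus cannot be longer than the longest input, refined to the stated constant. I'd want to be careful here that the length bound is genuinely $T(n)$ as opposed to just $n$; matching the stated $n/e$ lower bound on $T(n)$ presumably comes from the worst-case instance where one cannot do better, so the bound is tight and not improvable.

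Next, I would show that the solution can be taken to consist only of characters appearing in the inputs, so without loss of generality the relevant alphabet has size $O(\sigma)$ (or is already of size $\sigma$). Then the naive candidate set has size $\sigma^{T(n)}$, and Grover search over it with the classical $O(n^2)$-time Levenshtein distance checker (run against all $k$ inputs, taking the max) would give $\widetilde{O}(\sigma^{T(n)/2}\cdot\sqrt{?}\cdot k n^2)$, which is not the claimed bound. So the real content must be a smarter characterization of the candidate set. The natural route is the classical FPT-style structural lemma: an optimal center string is determined by the alignment columns, and the number of distinct "column patterns" across the $k$ inputs is controlled. I would try to show that an optimal $s$ can be reconstructed from a choice, for each of the (at most) $T(n)$ positions, of one of $\sigma$ symbols, but that the search space decomposes so that one is effectively searching over $\big(\text{something like } T(n)\big)^{2k}$ configurations of alignment breakpoints — one pair of breakpoints per input string per "block" — times the $\sigma$ symbol choices, giving a candidate set of size $O(\sigma \cdot T(n)^{2k})$; Grover search over this set, with a checker that computes all $k$ Levenshtein distances, yields runtime $\widetilde{O}(\sigma \cdot T(n)^{2k})$ after absorbing polynomial factors into the $\widetilde{O}$. (Note the proposition's bound is $T(n)^{2k}$ not $T(n)^{k}$, which strongly suggests a "two endpoints per string" alignment-window encoding, consistent with a CSSP-style or edit-script-based parameterization.)

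The main obstacle I anticipate is establishing the structural lemma that bounds the candidate set by $O(\sigma \cdot T(n)^{2k})$ rather than the trivial $\sigma^{T(n)}$ — i.e., proving that an optimal center string is fully determined by $O(k)$ discrete choices each from a range of size $O(T(n))$, together with $O(1)$ (or $\sigma$) symbol choices. This likely requires an exchange/normalization argument showing that one may assume the edit scripts from $s$ to each input are "canonical" (e.g. all insertions precede deletions within maximal runs, or the alignment is left-packed), so that the alignment of $s$ against the $i$-th input is encoded by a constant number of integer parameters in $[0, T(n)]$. Once that combinatorial bound is in hand, the quantum part is routine: prepare a uniform superposition over the (efficiently indexable) candidate set, use the reversible classical Levenshtein DP as the Grover oracle checking $\max_i d(s, x_i) \le d$, and apply amplitude amplification, incurring only the stated $\sqrt{\,\cdot\,}$ over the set size and polynomial overhead hidden in $\widetilde{O}$.
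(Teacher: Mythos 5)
The approach you describe is not the one the paper takes, and there are concrete gaps in yours that make it not go through as stated. The paper's proof is a direct quantisation of the Nicolas--Rivals dynamic programming algorithm: that algorithm fills a $2k$-dimensional Boolean array with side length $n+1$ (roughly, for each input $s_i$ one tracks a pair of indices, giving two dimensions per string), and each table entry is computed by a logical OR over $\mathcal{O}(\sigma\cdot 2^k)$ predecessors. The paper then applies the quantum dynamic programming result of Glos et al.\ (Theorem~\ref{thm:qdp}) to the $2k$-dimensional lattice, which replaces the $n^{2k}$ enumeration of lattice points by $T(n)^{2k}$, and multiplies in the $\sigma\cdot 2^k$ per-cell cost, yielding $\bigotilde{\sigma\cdot 2^k\cdot T(n)^{2k}}$.

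The central misunderstanding in your proposal is the role of $T(n)$. You treat $T(n)$ as a bound on the length of the optimal center string, and try to derive $n/e\le T(n)<n$ from a triangle-inequality argument about edit distances; but that is not what $T(n)$ is. In this paper $T(n)$ is the function controlling the effective cost of quantum dynamic programming on a length-$n$ lattice dimension: Glos et al.\ prove a speedup $n^{k}\to T(n)^{k}$ with $n/e\le T(n)<n$, and the bound $n/e$ is a consequence of the recursion tree structure of their nested Grover searches, not of any geometric fact about edit distance. The exponent $2k$ arises because the Nicolas--Rivals DP table is $2k$-dimensional (two index dimensions per input string), not because of ``two alignment breakpoints per string.''

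Separately, your Grover-based variant does not produce the claimed bound even under your own assumptions. Grover search over a candidate set of size $N$ with a polynomial-time checker costs $\bigotilde{\sqrt{N}}$; if your structural lemma really gave $N=\mathcal{O}(\sigma\cdot T(n)^{2k})$ candidates, Grover would give $\bigotilde{\sqrt{\sigma}\cdot T(n)^{k}}$, not $\bigotilde{\sigma\cdot T(n)^{2k}}$. You also omit the $2^k$ factor present in the proposition. To make a Grover-style argument match the stated runtime you would need a candidate set of size roughly $\sigma^2\cdot 2^{2k}\cdot T(n)^{4k}$ together with a cheap checker, which is not what your sketch produces and does not obviously correspond to any natural parameterization. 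If you want to reprove the proposition, the right starting point is the classical $\bigo{\sigma\cdot 2^k\cdot n^{2k}}$ DP of Nicolas and Rivals and the quantum DP framework, not a structural lemma about the form of the optimal center string.
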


\begin{restatable}[CSSP, Hamming distance]{prop}{cssphamming}
\label{prop:cssp_hamming}
    There exists a quantum algorithm for solving the CSSP under the Hamming distance, with runtime $\mathcal{\tilde{O}}\Bigl(
    \sigma^{d(\log(d)+2)/2} $ $ \cdot (kn)^{\log(d)/2} $ $ \cdot k^2 n^3
    \Bigr)$.
\end{restatable}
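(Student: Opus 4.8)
The plan is to take the known fixed-parameter algorithm for the Closest Substring Problem under Hamming distance — due to Marx, with running time $f(d)\cdot(kn)^{\mathcal{O}(\log d)}$ — and quantize it, replacing its central bounded-depth branching search by amplitude amplification (Grover search) to save a quadratic factor in both the $\sigma^{\Theta(d\log d)}$ and the $(kn)^{\Theta(\log d)}$ parts. First I would set up the classical skeleton. Let $L$ be the required length of the solution string $s$. The algorithm fixes one input, say $s_1$, and branches over which of its $\le n-L+1$ windows of length $L$ is the one within Hamming distance $d$ of $s$; this pins a full-length candidate $x$ with $d_H(x,s)\le d$. It then refines $x$ recursively: at each node it checks, deterministically and in polynomial time, whether some input $s_i$ has no length-$L$ window within distance $d$ of the current $x$, and if so it branches over (i) which window of (which) input is the witness — at most $kn$ choices — and (ii) how $s$ may disagree with $x$ on the $\mathcal{O}(d)$ positions isolated by the conflict analysis, i.e.\ which of them flip and to which of the $\sigma$ letters — at most $\sigma^{\mathcal{O}(d)}$ choices. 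The crucial combinatorial ingredient, which is the heart of Marx's analysis, is that a budget bounding the number of still-undetermined disagreements can be made to halve at every level, so the recursion tree has depth $\lceil\log(d+1)\rceil+\mathcal{O}(1)$; hence it has $\bigl(\sigma^{\mathcal{O}(d)}\cdot kn\bigr)^{\mathcal{O}(\log d)}=\sigma^{\mathcal{O}(d\log d)}(kn)^{\mathcal{O}(\log d)}$ root-to-leaf paths, each leaf yielding a complete candidate string that is tested against all inputs in polynomial time. Tracking constants, the search space has size $\sigma^{d(\log d+2)}(kn)^{\log d}$, and I expect the per-leaf cost (replaying the path and verifying the candidate) to contribute the polynomial factor $\tilde{\mathcal{O}}(k^2 n^3)$.

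The quantum step is then to search this space with Grover's algorithm. I would encode a leaf by the tuple of choices along its path — a point of $\bigl[\sigma^{\mathcal{O}(d)}kn\bigr]^{r}$ with $r=\lceil\log(d+1)\rceil+\mathcal{O}(1)$, so the index set has the size above. The oracle, given such a tuple, replays the classical recursion: at step $\ell$ it recomputes the current candidate and its conflict data, uses the $\ell$-th coordinate to pick a branch (reporting failure if that coordinate exceeds the number of branches available at that node), and after $r$ steps verifies the resulting string against all $k$ inputs. Every substep is a polynomial-time classical computation, so the oracle compiles to a reversible circuit of size $\tilde{\mathcal{O}}(k^2 n^3)$, with its work registers uncomputed in the usual way to leave a clean phase oracle. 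Grover search then returns a marked tuple — equivalently a valid solution, which we read off — in time $\tilde{\mathcal{O}}\!\bigl(\sqrt{\sigma^{d(\log d+2)}(kn)^{\log d}}\cdot k^2n^3\bigr)=\tilde{\mathcal{O}}\!\bigl(\sigma^{d(\log d+2)/2}(kn)^{\log d/2}\cdot k^2n^3\bigr)$, as claimed. The number of marked items being unknown is handled by the standard exponential-search wrapper, and the optimization version by binary search on $d$; both cost only logarithmic factors absorbed into $\tilde{\mathcal{O}}(\cdot)$. (Alternatively, one can amplitude-amplify level by level — nested search — which naturally avoids indexing pruned branches at the price of an $\mathcal{O}(\log d)$-fold error reduction, giving the same bound.)

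The main obstacle is the classical combinatorial lemma driving the $\mathcal{O}(\log d)$ recursion depth: that the sequence of witness windows can always be chosen so the disagreement budget halves, which is exactly what rules out a trivial $(kn)^{k}$ blow-up and must be reproduced (or cited in full). On the quantum side the remaining work is routine but needs care: verifying that the per-node feasibility test (``does $s_i$ have a window within $d$ of $x$?'') and the conflict-position computation are genuinely $\mathrm{poly}(k,n)$ and hence cleanly reversible; that the oracle's garbage is fully uncomputed so amplitude amplification acts on a clean oracle; and, if the nested-search route is chosen, that bounded-error composition over the $\mathcal{O}(\log d)$ levels keeps the overall failure probability below a constant. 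Nothing beyond Grover/amplitude amplification and the classical algorithm is required.
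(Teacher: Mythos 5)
You arrive at the correct bound, and like the paper you rest on Marx's generator lemma together with a Grover wrapper, but the classical skeleton you quantize is a different rendering of the algorithm. The paper takes Marx's algorithm as stated: a flat two-level enumeration, first over subsets $G$ of size $\log(d)+2$ drawn from the at most $kn$ length-$L$ substrings of the inputs (so at most $(kn)^{\log(d)+2}$ candidate generators), then over the at most $\sigma^{d(\log(d)+2)}$ strings $s$ consistent with $G$ on its non-conflict positions, with a per-item check of cost $\mathcal{O}(kn^2)$; quantization is then a two-level nested Grover, giving $\tilde{\mathcal{O}}\bigl(\sqrt{(kn)^{\log(d)+2}}\cdot\sqrt{\sigma^{d(\log(d)+2)}}\cdot kn^2\bigr)$, which simplifies to the stated bound. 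You instead unroll the \emph{proof} of the generator lemma into a depth-$\mathcal{O}(\log d)$ branching recursion (witness window $\times$ correction pattern at each node) and run a single Grover over path-encodings of leaves, replaying the recursion deterministically inside the oracle. Both routes yield the same quadratic improvement and the same exponent; the paper's version is a little simpler to justify, since it only needs the lemma's static bounds on $\abs{G}$ and the conflict and then enumerates, whereas your version must re-establish the depth bound as an explicit recursion invariant (the ``budget halving'' you correctly flag as the load-bearing combinatorial step). One bookkeeping quirk worth noting: your recursion as described has $(kn)^{\log(d)+\mathcal{O}(1)}$ leaves rather than $(kn)^{\log d}$, while the per-leaf oracle cost (replaying the path and verifying the candidate) is $\tilde{\mathcal{O}}(kn^2)$ rather than $\tilde{\mathcal{O}}(k^2 n^3)$; these two slips are off by the same factor of $kn$ in opposite directions and cancel after the square root, so your final expression still matches the paper's.
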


The best known non-trivial classical algorithms for the CSP run in time $\bigo{\sigma \cdot 2^k \cdot n^{2k}}$ \cite{nicolas_hardness_2005}, $\bigo{kn + k \cdot d^d}$~\cite{gramm_fixed-parameter_2003}, or $\bigo{kn + kd\cdot M^d}$ for $M = \bigo{\sigma}$~\cite{ma_more_2010,wang_efficient_2009, chen_fast_2011}, with each suited to different parameter regimes.
\Cref{prop:csp_hamming} therefore represents a quadratic improvement in the dominant scaling factor for the CSP under Hamming distance when the alphabet size is unbounded.
\Cref{prop:csp_levenshtein} is a non-trivial speedup, but the speedup is not polynomial in the size of the input.
Finally, \Cref{prop:csp_trivial} is a speedup over trivial classical searches, which are the best methods when $k$ and $d$ are both very large.

We also show that for the special case of binary alphabets,~\Cref{prop:csp_hamming} is tight under the quantum analogue of the Strong Exponential Time Hypothesis, QSETH, formulated by Buhrman et al.~\cite{buhrman_framework_2021}.

The best classical algorithm for the CSSP runs in time $\bigo{\sigma^{d(\log(d) + 2)} (kn)^{\log(d)}
k^3 n^4}$~\cite{marx_closest_2005}.
Similarly to~\Cref{prop:csp_hamming}, \Cref{prop:cssp_hamming} presents a quadratic improvement in the dominant scaling factors.

\section{Outlook}
\label{sec:outlook}
This work presents a suite of parameterized quantum algorithms for the Closest String problem, each achieving quadratic speedups over classical counterparts.
By incorporating known subroutines -- Grover search~\cite{grover_fast_1996}, quantum random walks~\cite{magniez_search_2011}, and quantum dynamic programming~\cite{ambainis_quantum_2018} -- into our algorithmic design, we demonstrate that quantum computing can offer meaningful advantages for hard combinatorial problems.

The CSP is particularly relevant in bioinformatics, where it serves as a computationally tractable proxy for full multiple sequence alignment. 
Improvements in solving this problem could lead to more efficient preprocessing, filtering, or heuristic evaluation stages in large-scale comparative genomics and phylogenetic analysis pipelines. 
As quantum hardware matures, our algorithms may form the basis for practical tools in genomics and related fields, especially in scenarios where fast, exact solutions are needed for moderate-size instances.

Several promising directions remain open for future work. 
There are many other classical algorithms for the CSP, targeting other combinations of parameters~\cite{amir_configurations_2012, ma_more_2010,wang_efficient_2009, chen_fast_2011}.
Adapting these algorithms to the quantum regime may yield improved speedups or tighter parameter dependencies. 
Related optimisation problems, such as the Distinguishing String Selection problem~\cite{kevin_lanctot_distinguishing_2003}, share structural features with the CSP and may benefit from similar quantum treatment.
More broadly, exploring the interplay between parameterized complexity theory and quantum algorithm design continues to be a rich area for foundational and applied research.

\section{Notation and Preliminaries}
\label{sec:notation}
\begin{table}[H]
\begin{tabularx}{\textwidth}{@{}p{0.15\textwidth}X@{}}
\toprule
\multicolumn{2}{@{}l}{\textbf{General Mathematical Notation}}\\
$\mathbb{Z}_+$  & The set of non-negative integers.\\
$\mathbb{R}_+$ & The set of non-negative real numbers.\\
$[M]$ & The set $\{0, 1, \ldots, M-1\}$.\\
\textbf{Symbols}\\
$\Sigma$  & An alphabet, usually $\{$A, C, G, T$\}$. \\
$\Sigma'$ & An extended alphabet with a ``gap'' symbol, $\Sigma' \coloneqq \Sigma \cup \{-\}$.\\
$\Sigma^*$  & The set of strings of any length with symbols from $\Sigma$, $\Sigma^* \coloneqq \bigcup_{l \in \mathbb{Z}_+} \Sigma^l$.\\
$\Sigma^{\leq n}$  & The set of strings of length at most $n$ with symbols from $\Sigma$, $\Sigma^{\leq n} \coloneqq \bigcup_{l =0}^{n} \Sigma^l$.\\
\textbf{Strings}\\
$s$ & A string $s \in \Sigma^*$.\\
$s[i]$  & The $i$th character of $s$.\\
$s[i, j]$   & The substring of $s$ running from $i$ to $j$ inclusive.\\
$\abs{s}$   & The length of $s$.\\
\textbf{Distances}\\
$d(\cdot,\,\cdot)$  & A generic distance metric $d: \Sigma^* \times \Sigma^* \rightarrow \mathbb{R}_+$.\\
$d_H(\cdot,\,\cdot)$  & The Hamming distance $d_H: \bigcup_{n \in \mathbb{Z}_+} \left(\Sigma^n \times \Sigma^n\right) \rightarrow \mathbb{Z}_+$.\\
$d_L(\cdot,\,\cdot)$  & The Levenshtein distance $d_L: \Sigma^* \times \Sigma^* \rightarrow \mathbb{Z}_+$.\\
$d_\delta(\cdot,\cdot)$ & A general weighted edit distance $d_\delta(\cdot,\cdot): \Sigma^* \times \Sigma^* \rightarrow \mathbb{R}_+$.\\
\multicolumn{2}{@{}l}{\textbf{Parameters}}\\
$k$ & Number of strings in problem input. \\
$d$ & Threshold distance in problem output. \\
\multicolumn{2}{@{}l}{\textbf{Markov Chains}}\\
$P$ & A Markov chain with transition matrix $P$.\\
$\Omega$ & The set of states of a chain $P$.\\
$\pi$ & The stationary distribution of $P$.\\
$\delta(P)$ & The eigenvalue gap of $P$. For lazy, reversible $P$, $\delta(P) = 1 - \lambda_2$.\\
\bottomrule
\end{tabularx}
\caption{A summary of notation used in this work.}
\end{table}

\subsection{Strings and distances}
A string $s$ is a finite sequence of letters of some finite alphabet $\Sigma$. For bioinformatics applications, this will often be the size-4 alphabet $\Sigma = \{A,\,C,\,G,\,T\}$. Let $\sigma = \abs{\Sigma}$.
When insertions or deletions are allowed in string transformations, we extend the alphabet to include a ``gap symbol'', defining $\Sigma' = \Sigma \cup \{-\}$.
The length of $s$ is $\abs{s}$, and for $0 \leq i < \abs{s}$, its $i$th character is $s[i]$. For $0 \leq i < j < \abs{s}$, $s[i]s[i+1]\ldots s[j]$ is a \emph{substring} of $s$ and is denoted by $s[i,j]$.

The Hamming distance $d_H(\cdot,\cdot)$ between two strings $s$ and $s'$ of equal length is the number of positions $i$ for which $s[i] \neq s'[i]$.
The Levenshtein distance~\cite{levenshtein_binary_1965} $d_L(\cdot,\cdot)$ between two strings $s$ and $s'$ is the minimum number of characters that must be inserted, deleted, or substituted to transform $s$ into $s'$. For a formal definition based on the notion of alignments, see, e.g., Chapter 6 of~\cite{pevzner_computational_2000}. 
One can also consider a more general scenario, in which the cost of inserting, deleting and substituting can vary depending on which letter(s) of the alphabet are involved. 
Let $\delta: \Sigma' \times \Sigma' \rightarrow \mathbb{R}_+$ be a mapping that encodes these costs.
If $\delta$ satisfies the conditions of a \emph{metric}~\cite{nicolas_hardness_2005}, then the corresponding function encoding the cost of transforming $s$ to $s'$ is a true distance.
In this case, we will use the notation  $d_\delta(\cdot,\cdot)$ and refer to it as a weighted edit distance.

\subsection{Parameterized Complexity} Parameterized Complexity is a branch of complexity theory that classifies problems based on several aspects of their input or output, rather than solely on the size of the input.
It is most commonly applied in the context of $\NP$-hard problems.
Whereas classical complexity considers only input size $n$, we now consider some additional parameter $k$ that represents additional structure in the problem, such as the number of input strings in a string selection problem, and study the behaviour of the problem with both $n$ and $k$.

A key goal is to determine whether a problem can be solved in time polynomial in $n$ for any fixed value of the parameter $k$. 
In this case, the problem is tractable for small values of $k$, even if it remains $\NP$-hard in general.
If an algorithm exists with runtime $\bigo{n^{c} \cdot f(k)}$ for a constant $c$ and some, generally exponential, function $f$, we say the problem is \emph{Fixed Parameter Tractable}. 
The complexity class of such problems is $\FPT$; $\FPT$ contains all of $\P$ as well as any $\NP$ optimisation problem that admits an efficient poly-time approximation scheme.
If, instead, an algorithm exists with runtime $n^{g(k)} \cdot f(k)$ for some computable $g(k)$, the problem is said to lie in $\XP$. 
Clearly $\FPT \subseteq \XP$, and in fact the inclusion is strict by diagonalization.
These definitions naturally extend to algorithms with several parameters.

\subsection{String Selection Problems} 
String selection problems~\cite{kevin_lanctot_distinguishing_2003} are a collection of interrelated problems that reduce to finding a pattern that occurs, possibly with some error, in a set of strings.
The size of the error must be calculated using some distance measure.
The most commonly studied in the literature is the less computationally intensive Hamming distance, but some works have considered the Levenshtein distance or general weighted edit distances.
Different distances can change the complexity class of the problem, so we will be precise about which one is under consideration throughout.

The most straightforward is the CSP: given $k$ strings, find a string within distance $d$ of all of them.
We state this formally as follows:
\begin{prob}[Closest String Problem]
    Given $k$ strings $s_1,\,\ldots,\,s_k \in \Sigma^{\leq n}$ and a threshold distance $d$, find a string $s \in \Sigma^*$ such that $\max_i\{d(s,\,s_i)\} \leq d$.
\end{prob}
The CSP is also known as the \emph{Center String Problem}.

When $d$ is the Hamming distance, we require all strings to be of equal length for the problem to be well-defined.
The CSP is known to be $\NP$-complete, even for binary alphabets under the Hamming distance~\cite{frances_covering_1997}.
There are several known parameterized algorithms for the CSP under various distance metrics. A particularly simple one is due to Nicolas and Rivals~\cite{nicolas_hardness_2005}, and it functions for any ``natural'' distance metric where deletion of a letter is always strictly cheaper than substituting to a different letter and then deleting the new one.
They proceed by Dynamic Programming, and achieve a runtime of $\bigo{\sigma\cdot 2^k \cdot n^{2k}}$.
This places the CSP in $\XP$ with respect to $k$. Note that $d$ is allowed to be unbounded in this case.

Another notable algorithm is due to Gramm et al.~\cite{gramm_fixed-parameter_2003}, solving the CSP under the Hamming distance. For fixed $d$, the algorithm sets up a small space of feasible solutions and searches over them in time $\bigo{kn + k \cdot d^{d+1}}$. This places the CSP under the Hamming distance in $\FPT$ with respect to the distance threshold $d$.
Other algorithms~\cite{ma_more_2010,wang_efficient_2009,chen_fast_2011} achieve $\bigo{kn + kd\cdot M^d}$ for $M = \bigo{\sigma}$. For fixed alphabet size, these asymptotically outperform the algorithm of Gramm et al. 

A close relative of the CSP is the CSSP:
\begin{prob}[Closest Substring Problem]
    Given $k$ strings $s_1,\,\ldots,\,s_k \in \Sigma^{\leq n}$ and a threshold distance $d$, find a length-$L$ string $s \in \Sigma^L$ such that for all $i = 1,\ldots,k$, there exists a substring $w_i$ of $s_i$ with $d(s, w_i) \leq d$.
\end{prob}
The CSP is a special case of the CSSP with $L = n$.
When $d$ is the Hamming distance, we require that the substrings $w_i$ have length $L$.
Marx~\cite{marx_closest_2005} presents an algorithm for the CSSP under Hamming distance with runtime $\bigo{\sigma^{d(\log(d) + 2)} (kn)^{\log(d)}
k^3 n^4}$, placing CSSP in $\XP$ for this distance.
The algorithm proceeds by arguing about so-called \emph{generators} of length-$L$ substrings.

\subsection{Quantum Algorithms} 
For some function $f: [M] \rightarrow [N]$, $M,N \in \mathbb{N}$, we refer to a unitary gate $O_f$ that implements $\ket{x}\ket{b} \mapsto \ket{x}\ket{b + f(x) \mod N}$ for $x \in [M], b \in [N]$ as an oracle for $f$.

We assume oracle access model to a string $s \in \Sigma^n$, wherein we may query a quantum unitary gate $O_s$ implementing 
\begin{equation}
    O_s : \ket{i}\ket{b} \mapsto \ket{i}\ket{b + s[i] \mod \sigma}
\end{equation} 
for any $i \in [n]$ and $b \in [\sigma]$. 
This quantum query model~\cite{ambainis_quantum_2004, buhrman_complexity_2002} is standard in the literature of quantum algorithms.

A key quantum subroutine is the Grover search algorithm, which finds a marked element of a set of $n$ elements in time $\bigotilde{\sqrt{n}}$:
\begin{theorem}[Grover search~\cite{grover_fast_1996}]
\label{thm:grover}
    There is a quantum algorithm that, given a function $f: [n] \rightarrow \{0,\,1\}$, finds an index $i$ with $f(i) = 1$ or reports that no such index exists.
    The algorithm uses $\bigotilde{\sqrt{n}}$ queries to an oracle for $f$.
\end{theorem}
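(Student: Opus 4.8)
The plan is to first handle the promise version in which $t \coloneqq \abs{f^{-1}(1)}$ is known and nonzero, then remove both assumptions. First I would turn the given oracle into a phase oracle $O_f^{\pm}:\ket{i}\mapsto(-1)^{f(i)}\ket{i}$ by attaching an ancilla in $\ket{-}$ and uncomputing, at the cost of one query. Then I would work in the two-dimensional real subspace $V$ spanned by the normalised uniform superpositions $\ket{M}$ over marked indices and $\ket{U}$ over unmarked indices; the uniform state $\ket{\psi} = n^{-1/2}\sum_i \ket{i}$ lies in $V$ and makes angle $\theta$ with $\ket{U}$, where $\sin\theta = \sqrt{t/n}$. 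The Grover iterate $G \coloneqq (2\ket{\psi}\bra{\psi} - I)\,O_f^{\pm}$ preserves $V$ and acts there as a rotation by $2\theta$, being the composition of the reflection about $\ket{U}$ (namely $O_f^{\pm}$) with the reflection about $\ket{\psi}$. Hence $G^r\ket{\psi} = \sin((2r+1)\theta)\ket{M} + \cos((2r+1)\theta)\ket{U}$, so taking $r = \lfloor \pi/(4\theta)\rfloor = \bigo{\sqrt{n/t}}$ brings $(2r+1)\theta$ within $\theta$ of $\pi/2$; a standard-basis measurement then returns a marked index with probability $1 - \bigo{t/n}$, which one extra query verifies. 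This uses $\bigo{\sqrt{n/t}} = \bigo{\sqrt n}$ queries.

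Next I would remove the assumption that $t$ is known, using the exponential-search strategy of Boyer, Brassard, H{\o}yer and Tapp. Keep a bound $m$, initialised to $1$ and multiplied by a fixed constant $c \in (1,2)$ after each unsuccessful round; in each round draw $r$ uniformly from $\{0,1,\dots,\lceil m\rceil - 1\}$, apply $G^r$ to $\ket{\psi}$, measure, and test the outcome with one query to $f$. Averaging the rotation formula over $r$ replaces the oscillating success probability $\sin^2((2r+1)\theta)$ by its mean $\tfrac12 - \sin(4\lceil m\rceil\theta)/(4\lceil m\rceil\sin 2\theta)$, which exceeds $\tfrac14$ as soon as $\lceil m\rceil \ge 1/\sin 2\theta$, i.e.\ once $m \gtrsim \sqrt{n/t}$. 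Since $m$ grows geometrically, this threshold is reached after $\bigo{\log(n/t)}$ rounds, and summing the geometric series of $m$-values gives total expected query cost $\bigo{\sqrt{n/t}} = \bigo{\sqrt n}$. To cover the ``no such index'' case, cap $m$ at $\Theta(\sqrt n)$ and repeat the whole procedure $\Theta(\log n)$ times (or $\Theta(\log(1/\varepsilon))$ for failure probability $\varepsilon$); if no round in any repetition returns a marked index, report that none exists. When $t = 0$ every round necessarily fails, so the report is always correct; when $t \ge 1$ each repetition independently fails with probability bounded away from $1$, so a false ``none exists'' occurs with probability at most $\varepsilon$. The repetitions add only the polylogarithmic factor absorbed into $\bigotilde{\cdot}$, so the total is $\bigotilde{\sqrt n}$ queries.

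The main obstacle is precisely the unknown-$t$ case: a deterministic iteration count tuned to a wrong guess of $t$ can overshoot, rotating past $\ket{M}$, so the success probability can drop to zero and a naive search over guesses of $t$ need not converge. Randomising the number of iterations within each round is what controls this — it replaces the oscillating amplitude by a benign average — and the geometric growth of $m$ guarantees that a good regime is reached after only logarithmically many rounds while paying at most a constant factor over the optimal $\bigo{\sqrt{n/t}}$. The remaining ingredients — the phase-oracle gadget, the ``two reflections compose to a rotation'' identity, the single-query verification step, and the union bound over repetitions — are all routine.
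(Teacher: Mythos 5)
Correct. The paper states this theorem as a cited black-box result from Grover (1996) and gives no proof of its own, so there is no in-paper argument to compare against; your proof is the standard one, combining the two-reflections-compose-to-a-rotation picture for the known-$t$ case with the Boyer--Brassard--H{\o}yer--Tapp randomised exponential search to handle unknown $t$ and the capped-and-repeated variant to certify the empty case, and all the estimates (the $\sin^2((2r+1)\theta)$ amplitude formula, the averaged success probability exceeding $1/4$ once $\lceil m\rceil \ge 1/\sin 2\theta$, and the geometric-series bound on expected queries) are right.
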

A use of this algorithm to find the minimum (equivalently, maximum) of a set of $n$ values in $\bigotilde{\sqrt{n}}$ was proposed by Durr and Hoyer~\cite{durr_quantum_1999}:
\begin{theorem}[Quantum minimum/maximum finding~\cite{durr_quantum_1999}]
    There is a quantum algorithm that, given a function $f: [n] \rightarrow \mathbb{R}$, finds the index $i$ such that $f(i) \leq f(j) \ \forall j$ (equivalently, $f(i) \geq f(j) \ \forall j$).
    The algorithm uses $\bigotilde{\sqrt{n}}$ queries to an oracle for $f$.
\end{theorem}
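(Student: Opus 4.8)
The plan is to reduce to Grover search (\Cref{thm:grover}) by thresholding, following the classical construction of Durr and Hoyer~\cite{durr_quantum_1999}. First, for any real threshold $t$, define the predicate $g_t$ by setting $g_t(i) = 1$ if $f(i) < t$ and $g_t(i) = 0$ otherwise; an oracle for $g_t$ is implemented with $\bigo{1}$ queries to the oracle for $f$ (query $f(i)$ into an ancilla register, compare against $t$, then uncompute the ancilla). Applying \Cref{thm:grover} to $g_t$ therefore yields a subroutine $\textsc{Below}(t)$ that, using $\bigotilde{\sqrt{n}}$ queries, either returns an index $i$ with $f(i) < t$ or certifies that no such index exists. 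To make the \emph{optimal index} unique even when values repeat, I would compare the pairs $(f(i), i)$ lexicographically instead of the bare values $f(i)$; this does not affect the query count.

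The algorithm is then: sample $y \in [n]$ uniformly at random, and repeatedly invoke $\textsc{Below}(f(y))$, replacing $y$ by the returned index on each success and halting with output $y$ at the first failure. Correctness is immediate, since at a failing call no index has value below $f(y)$, so $y$ attains the minimum. The real content is the query bound. Each call to $\textsc{Below}$ costs $\bigotilde{\sqrt{n}}$, so it suffices to show that the expected number of outer iterations is $\bigo{\log n}$. For this I would invoke a standard additional property of the search routine of \Cref{thm:grover}: on success it returns an index that is (approximately) uniform among all marked indices. Order the indices by $f$-value into ranks $1 < 2 < \cdots < n$. When $y$ currently has rank $j$, the marked set for $\textsc{Below}(f(y))$ is exactly the $j-1$ indices of smaller rank, so the rank of the next $y$ is uniform on $\{1,\dots,j-1\}$; the sequence of ranks visited is thus the classical ``records'' process. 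Writing $S(j)$ for the expected number of steps to reach rank $1$ from rank $j$, one obtains $S(j) = 1 + \frac{1}{j-1}\sum_{i=1}^{j-1} S(i) = H_{j-1}$, and averaging over the uniform starting rank gives expected iteration count $\frac{1}{n}\sum_{j=1}^{n} H_{j-1} = \bigo{\log n}$. Multiplying the two estimates, the expected total query complexity is $\bigotilde{\sqrt{n}}$.

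To turn the expectation bound into the worst-case statement, I would abort any run that exceeds $C\sqrt{n}\log^2 n$ queries, for a suitable constant $C$ (amplifying each $\textsc{Below}$ call's success probability to $1 - n^{-2}$ by $\bigo{\log n}$ repetitions first, so that a union bound over all $\bigotilde{\sqrt{n}}$ calls preceding the abort is harmless). By Markov's inequality the run then succeeds with probability at least $2/3$, and $\bigo{\log(1/\varepsilon)}$ independent repetitions that keep the best $y$ ever produced drive the failure probability below any desired $\varepsilon$ — all still within $\bigotilde{\sqrt{n}}$ queries.

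The one genuinely delicate point is the iteration-count analysis. \Cref{thm:grover} as stated promises only to return \emph{some} marked index, and an adversary controlling that choice could lower the rank by exactly one per step, forcing $\Theta(n)$ iterations and hence only an $\bigotilde{n^{3/2}}$ bound. The fix is to rely on the near-uniformity of the search routine's output (a standard feature, though not literally part of the statement of \Cref{thm:grover}), or else to cite the amortised analysis of Durr and Hoyer~\cite{durr_quantum_1999} directly; that analysis can moreover be sharpened by using that $\textsc{Below}$ with $m$ marked elements costs only $\bigotilde{\sqrt{n/m}}$. Everything else is routine bookkeeping.
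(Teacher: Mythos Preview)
The paper does not prove this theorem at all; it simply states it as a known result and cites~\cite{durr_quantum_1999}. There is therefore nothing in the paper to compare your proposal against.

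That said, your sketch is essentially the original Durr--Hoyer argument and is correct in outline. A couple of minor remarks: (i) in your worst-case paragraph you speak of a union bound over ``all $\bigotilde{\sqrt{n}}$ calls'' to $\textsc{Below}$, but the number of outer iterations is $\bigo{\log n}$, not $\bigotilde{\sqrt{n}}$, so the union bound is even easier than you suggest; (ii) as you yourself note at the end, the $\bigo{\log n}$-iterations argument leans on near-uniform output of the Grover subroutine, which is not part of the statement of \Cref{thm:grover} as given in the paper. The clean way around this is exactly the amortised Durr--Hoyer analysis you mention, using the $\bigotilde{\sqrt{n/m}}$ cost when $m$ elements are marked and summing over the harmonic-like sequence of ranks; that route yields the bound without any assumption on the distribution of the returned index.
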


A second quantum search routine is due to Magniez et al.~\cite{magniez_search_2011} and proceeds via a random walk.
It is a stronger version of the algorithms provided by Ambainis~\cite{ambainis_quantum_2014} and Szegedy~\cite{szegedy_quantum_2004}.
A full introduction to this result, including the requisite background on Markov chains, is given in~\Cref{app:quantum_search}, but we briefly state the main results here.

Suppose we have a set of states $\Omega$ and a subset of $M \subset \Omega$ of \emph{marked} states, and we wish to find some $x \in M$.
Classically, we might consider a search algorithm as described by the pseudocode in~\Cref{code:c_search}.
Assuming that the algorithm maintains a data structure $D$ that associates data $D(x)$ to each state $x \in \Omega$, then the complexity of the algorithm rests on 3 quantities:
\begin{itemize}
    \item \textbf{Setup} $S$: The cost of sampling $x \in \Omega$ according to the distribution $s$ and constructing $D(x)$ from $x$.
    \item \textbf{Update} $U$: The cost of simulating a transition $x \rightarrow y$ according to $P$ and of updating $D(x)$ to $D(y)$.
    \item \textbf{Checking} $C$: The cost of checking whether $x \in M$ using $D(x)$.
\end{itemize}
Then there is a classical search algorithm as described by~\Cref{thm:c_search}:
\begin{theorem}[Classical Search via Random Walk~\cite{magniez_search_2011}]
    \label{thm:c_search}
    Let $\delta > 0$ be the eigenvalue gap of an ergodic, symmetric Markov chain. Let ${\abs{M}}/{\abs{\Omega}} \geq \epsilon > 0$ whenever $M$ is non-empty. 
    For the uniform initial distribution $s$, the above search algorithm succeeds w.h.p. if $t_1 = \mathcal{O}\left(\frac{1}{\delta}\right)$ and $t_2 = \mathcal{O}\left(\frac{1}{\epsilon}\right)$. 
    The incurred cost is of order:
    \begin{equation}
            S + \frac{1}{\epsilon}\left( \frac{1}{\delta} U + C \right).
    \end{equation}
\end{theorem}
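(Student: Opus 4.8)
The plan is to decouple the running-time bookkeeping, which is immediate, from the correctness analysis, where the real content lies. The search algorithm of \Cref{code:c_search} performs one setup and then $t_2$ rounds, each consisting of a single membership check followed by $t_1$ walk steps that also maintain the data structure; hence its cost is $S + t_2\,(C + t_1 U)$, and substituting $t_1 = \Theta(1/\delta)$ and $t_2 = \Theta(1/\epsilon)$ yields exactly $S + \tfrac{1}{\epsilon}\bigl(\tfrac{1}{\delta} U + C\bigr)$ up to constants. So everything reduces to showing these two choices guarantee success with good probability.

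For correctness, observe first that the algorithm never errs when $M = \emptyset$ (it simply reports failure) and never outputs a non-marked state (it outputs only after a passing check), so the only failure event is reporting ``no marked element'' while $M \neq \emptyset$. Since $P$ is symmetric and ergodic, its stationary distribution $\pi$ is uniform on $\Omega$; started from $\pi$, every examined state $X_{j t_1}$ is again distributed as $\pi$, so each of the $t_2$ checks individually succeeds with probability $\abs{M}/\abs{\Omega} \ge \epsilon$. If the checks were independent we would be done, as $(1-\epsilon)^{t_2}$ is a small constant for $t_2 = \Theta(1/\epsilon)$. They are correlated, so I would instead bound the probability that \emph{all} checks fail directly. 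Writing $\Pi$ for the diagonal $\{0,1\}$-projector onto coordinates in $\Omega \setminus M$, expanding the joint probability as an alternating product of transition matrices and projectors and using $\Pi^2 = \Pi$, symmetry of $P^{t_1}$, and $\mathbf 1 = \abs{\Omega}\pi$ gives $\Pr[\text{all fail}] = \tfrac{1}{\abs{\Omega}}\langle \mathbf 1_{\Omega\setminus M},\, (\Pi P^{t_1}\Pi)^{t_2-1}\,\mathbf 1_{\Omega\setminus M}\rangle$. Bounding the quadratic form by the operator norm then yields $\Pr[\text{all fail}] \le \bigl(1 - \tfrac{\abs{M}}{\abs{\Omega}}\bigr)\,\norm{\Pi P^{t_1}\Pi}^{t_2-1} \le \norm{\Pi P^{t_1}\Pi}^{t_2-1}$.

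The crux is then to show that a \emph{short} walk already contracts this norm: $\norm{\Pi P^{t_1}\Pi} \le 1 - \tfrac{\epsilon}{2}$ once $t_1 = \Theta(1/\delta)$. Taking $P$ lazy (as in the notation table; replacing $P$ by $(I+P)/2$ at worst doubles $1/\delta$), all eigenvalues of $P$ on the orthogonal complement of the uniform unit vector $u_1 = \mathbf 1 / \sqrt{\abs{\Omega}}$ lie in $[0, 1-\delta]$. For any unit vector $w$ supported on $\Omega \setminus M$, write $w = \alpha u_1 + w_\perp$ with $w_\perp \perp u_1$; Cauchy--Schwarz against $u_1$ gives $\alpha^2 \le \abs{\Omega\setminus M}/\abs{\Omega} = 1 - \abs{M}/\abs{\Omega} \le 1 - \epsilon$, so $\norm{w_\perp}^2 = 1 - \alpha^2 \ge \epsilon$. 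Hence $\langle w, P^{t_1} w\rangle = \alpha^2 + \langle w_\perp, P^{t_1} w_\perp\rangle \le \alpha^2 + (1-\delta)^{t_1}(1-\alpha^2) = 1 - (1-\alpha^2)\bigl(1 - (1-\delta)^{t_1}\bigr) \le 1 - \epsilon\bigl(1 - (1-\delta)^{t_1}\bigr)$, and choosing $t_1 = \lceil 1/\delta \rceil$ makes $(1-\delta)^{t_1} \le e^{-1} < \tfrac{1}{2}$, so $\langle w, P^{t_1} w\rangle \le 1 - \tfrac{\epsilon}{2}$; taking the supremum over such $w$ (using that $P^{t_1}$ is PSD) bounds $\norm{\Pi P^{t_1}\Pi}$. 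Plugging this back, the failure probability is at most $(1 - \tfrac{\epsilon}{2})^{t_2-1} \le e^{-\epsilon(t_2-1)/2}$, which can be driven below any desired constant by the constant hidden in $t_2 = \Theta(1/\epsilon)$. The main obstacle is precisely this spectral step: one cannot route the argument through total-variation mixing, since mixing in $O(1/\delta)$ steps is in general impossible (it costs an extra $\log(1/\pi_{\min})$ factor); the point is that only the weaker, $\epsilon$-sensitive $\ell_2$/operator-norm estimate is needed, and that is what keeps $t_1$ at $O(1/\delta)$.
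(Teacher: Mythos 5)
Your proof is correct and substantially more rigorous than what the paper offers. The paper cites this as a known result of Magniez et al.\ and gives only a heuristic in Appendix~A: run the walk for $t_1 \approx t_{\text{mix}}$ steps so the state is approximately stationary, then treat the $t_2$ observations as near-independent samples from $\pi$ and invoke a Chernoff bound. As you correctly isolate, that heuristic cannot deliver the stated $t_1 = \mathcal{O}(1/\delta)$, because the mixing-time upper bound carries an extra $\log(1/\pi_{\min})$ factor; total-variation mixing is more than the argument needs. Your replacement --- writing $\Pr[\text{all fail}] = \frac{1}{\abs{\Omega}}\langle \mathbf 1_{\Omega\setminus M},\,(\Pi P^{t_1}\Pi)^{t_2-1}\,\mathbf 1_{\Omega\setminus M}\rangle$ and showing $\norm{\Pi P^{t_1}\Pi} \le 1 - \epsilon/2$ after only $\Theta(1/\delta)$ steps, via the decomposition $w = \alpha u_1 + w_\perp$ and the observation that any unit vector supported on $\Omega\setminus M$ has $\norm{w_\perp}^2 \ge \epsilon$ --- is precisely the argument that removes the log factor: one only needs a constant-factor damping of the component orthogonal to $u_1$, not full mixing, and $\Theta(1/\delta)$ steps suffice for that. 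The rest (the cost bookkeeping $S + t_2(C + t_1 U)$, the reduction of correctness to the one-sided failure event, the PSD step letting you bound $\norm{\Pi P^{t_1}\Pi}$ by a supremum over unit vectors supported on $\Omega\setminus M$, and the final $(1-\epsilon/2)^{t_2-1}\le e^{-\epsilon(t_2-1)/2}$ estimate) is all sound.

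One point you handle but could state more carefully is laziness. The theorem hypothesis is a symmetric ergodic chain, while the paper's definition $\delta(P) = 1 - \lambda_2$ is given for \emph{lazy} reversible chains; for a non-lazy chain, eigenvalues near $-1$ are not controlled by $1-\lambda_2$, so $(1-\delta)^{t_1}$ would not bound $\langle w_\perp, P^{t_1} w_\perp\rangle$ from above, nor would $P^{t_1}$ be PSD as your supremum step requires. Your fix of passing to $(I+P)/2$ loses only a factor of $2$ in $\delta$, preserves symmetry, ergodicity and the uniform stationary distribution, and each lazy step can be simulated with the same update cost $U$ (flip a fair coin; on heads stay, on tails take a $P$-step), so the resource accounting is unchanged up to constants. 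Stating that this replacement is made once at the outset would tighten the writeup.
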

\begin{algorithm}
\caption{Classical Search via Random Walk~\cite{magniez_search_2011}}\label{code:c_search}
\begin{algorithmic}[1]
\Require Integers $t_1, t_2$, probability distribution $s$ over $\Omega$, Markov chain $P$ on $\Omega$.
\State $x \gets$ sample from $s$
\For{$i = 1, \ldots, t_2$} 
    \If{$x \in M$}
        \State \Return $x$
    \Else
        \State $x \gets t_1$th step of a simulation of $P$ starting from $x$
    \EndIf
\EndFor
\State \Return "No marked state exists"
\end{algorithmic}
\end{algorithm}
For the quantum case, the data structure is captured by states belonging to an expanded Hilbert space $\mathbb{C}^{\Omega_D}$ where $\Omega_D = \{ (x, D(x)) : x \in \Omega \}$.
For conciseness, we write $\ket{x}_D \equiv \ket{x, D(x)}$.
For the search, we actually consider walks on the edges of the Markov chain, so the walk lives in the Hilbert space $\mathcal{H} = \mathbb{C}^{\Omega_D \times \Omega_D}$.

Suppose we now have a Markov chain $P$ with stationary distribution $\pi$ and transition probabilities $\{p_{xy} : x,y \in \Omega\}$.
Let $P^*$ be the time-reversed Markov chain with transition probabilities $p_{xy}^*$.
The costs of the quantum search are given by:
\begin{itemize}
    \item \textbf{(Quantum) Setup} $S$: The cost of constructing the state $\sum_x \sqrt{\pi_x} \ket{x}_D \ket{0}_D$.
    \item \textbf{(Quantum) Update} $U$: The cost of realising any of the unitary transformations
    \begin{align*}
        \ket{x}_D \ket{0}_D \mapsto \ket{x}_D \sum_y \sqrt{p_{xy}} \ket{y}_D, \\
        \ket{0}_D \ket{y}_D \mapsto \sum_x \sqrt{p^*_{yx}}\ket{x}_D \ket{y}_D,
    \end{align*}
    or their inverses.
    \item \textbf{(Quantum) Checking} $C$: The cost of realising the conditional phase flip
    $$
    \ket{x}_D \ket{y}_D \mapsto \begin{cases}
        -\ket{x}_D \ket{y}_D & \text{if } x \in M,\\
        \phantom{-}\ket{x}_D \ket{y}_D & \text{otherwise.}
    \end{cases}
    $$
\end{itemize}
Then there exists a quantum search algorithm as described by~\Cref{thm:q_search}.
\begin{theorem}[Quantum Search via Random Walk~\cite{magniez_search_2011}]
\label{thm:q_search}
Let $S$, $U$ and $C$ respectively be the costs of setting up the required data structure, updating the data structure at each step of the walk, and applying a phase flip to elements $x \in M$ that are marked for the search.

Let $\delta > 0$ be the eigenvalue gap of a reversible, ergodic Markov chain $P$, and let $\epsilon > 0$ be a lower bound on the probability that an element chosen from the stationary distribution of $P$ is marked whenever $M$ is non-empty. Then, there is a quantum algorithm that with high probability, determines if $M$ is empty or finds an element of $M$, with cost of order 
\begin{equation}
    S + \frac{1}{\sqrt{\epsilon}} \left( \frac{1}{\sqrt{\delta}} U + C\right).
\end{equation}
\end{theorem}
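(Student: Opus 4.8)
The plan is to reconstruct the MNRS algorithm: quantize $P$ \`a la Szegedy and then run amplitude amplification on the resulting walk. First I would introduce the walk operator $W = W(P)$ on $\mathcal{H} = \mathbb{C}^{\Omega_D \times \Omega_D}$ as the product $W = \mathrm{ref}_B\,\mathrm{ref}_A$ of the two reflections $\mathrm{ref}_A = 2\Pi_A - I$ and $\mathrm{ref}_B = 2\Pi_B - I$, where $\Pi_A$ projects onto $A = \mathrm{span}\{\ket{p_x}\}$ with $\ket{p_x} := \ket{x}_D\sum_y\sqrt{p_{xy}}\ket{y}_D$, and $\Pi_B$ onto $B = \mathrm{span}\{\ket{p^*_y}\}$ with $\ket{p^*_y} := \sum_x\sqrt{p^*_{yx}}\ket{x}_D\ket{y}_D$. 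Since the $\ket{p_x}$ are orthonormal, $\mathrm{ref}_A$ is obtained by conjugating the reflection about $\{\ket{x}_D\ket{0}_D\}$ with the first Update unitary of~\Cref{thm:q_search}, and likewise for $\mathrm{ref}_B$ using the second; hence a single application of $W$ costs $\mathcal{O}(U)$. I would then invoke Szegedy's spectral correspondence: on the invariant subspace $A+B$ the eigenphases of $W$ occur in pairs $\pm 2\theta_j$, where $\cos\theta_j$ ranges over the singular values of the discriminant matrix with entries $\sqrt{p_{xy}p^*_{yx}}$, whose spectral gap is exactly $\delta$. It follows that the stationary superposition $\ket{\pi} := \sum_x\sqrt{\pi_x}\ket{p_x}$ is the unique eigenphase-$0$ vector of $W$ in $A+B$, while every other eigenphase of $W$ on this subspace has magnitude $\Omega(\sqrt{\delta})$.

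Next I would use amplitude amplification to rotate $\ket{\pi}$ into the marked subspace. Let $\ket{\pi_M}$ be the normalised projection of $\ket{\pi}$ onto states with $x \in M$; then $\lvert\langle \pi | \pi_M \rangle\rvert^2 = \pi(M) \geq \epsilon$ whenever $M$ is non-empty. Amplitude amplification requires two reflections: the Checking reflection $\mathrm{ref}_M$ of~\Cref{thm:q_search}, at cost $C$, and the reflection $R_\pi := 2\ket{\pi}\bra{\pi} - I$ about the stationary state. The latter is not directly implementable, but since $\ket{\pi}$ is the only eigenphase-$0$ vector of $W$ and all other eigenphases exceed $\Omega(\sqrt{\delta})$ in magnitude, $R_\pi$ can be approximated to error $\eta$ in operator norm by phase detection on $W$ -- a bounded-precision phase estimation that imparts a phase only to the eigenphase-$0$ block -- at the cost of $\mathcal{O}\!\left(\tfrac{1}{\sqrt{\delta}}\log\tfrac1\eta\right)$ applications of $W$. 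Alternating $\mathcal{O}(1/\sqrt{\epsilon})$ rounds of $\mathrm{ref}_M$ and the approximate $R_\pi$, starting from $\ket{\pi}$ (prepared at Setup cost $S$), produces a state with constant overlap on the marked subspace; measuring and checking membership then returns a marked element, while the standard randomised-iteration-count version of amplitude amplification handles the unknown size of $M$ and, after all attempts fail, certifies $M = \emptyset$.

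Collecting the costs gives the stated bound: $S$ for the Setup, plus $\mathcal{O}(1/\sqrt{\epsilon})$ rounds each costing one $\mathrm{ref}_M$ (cost $C$) and one approximate $R_\pi$ (cost $\mathcal{O}(\tfrac{1}{\sqrt{\delta}}U)$, with the $\log\tfrac1\eta$ factor absorbed into $\widetilde{\mathcal{O}}$), for a total of order $S + \tfrac{1}{\sqrt{\epsilon}}\big(\tfrac{1}{\sqrt{\delta}}U + C\big)$. I expect the main obstacle to be the error analysis: because $R_\pi$ is only implemented approximately, I must show the errors do not accumulate over the $\mathcal{O}(1/\sqrt{\epsilon})$ amplitude-amplification rounds. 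This is handled by choosing $\eta$ to be a sufficiently small inverse polynomial in $1/\epsilon$ -- costing only $\mathrm{polylog}(1/\epsilon)$ overhead, hidden in $\widetilde{\mathcal{O}}$ -- together with a hybrid argument bounding $\| (\widetilde{R}_\pi\,\mathrm{ref}_M)^t - (R_\pi\,\mathrm{ref}_M)^t \| \leq t\,\eta$, so that the perturbed process tracks the ideal one up to $o(1)$ error. A secondary technical point is to verify that $\ket{\pi}$ genuinely lies in the subspace $A+B$ on which Szegedy's eigenphase bound holds, so that the $\Omega(\sqrt{\delta})$ phase gap is legitimately available to the phase-detection subroutine.
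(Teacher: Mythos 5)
Your reconstruction follows exactly the route summarized in the paper's Appendix~\ref{app:q_search_algo}: Szegedy-quantize $P$ via the two reflections $\mathrm{ref}_A,\mathrm{ref}_B$ (each costing $\mathcal{O}(U)$), use the quadratic phase-gap relation so $\ket{\pi}$ is the unique eigenphase-$0$ state with all other eigenphases $\Omega(\sqrt{\delta})$, implement the reflection about $\ket{\pi}$ by phase detection on $W$ in $\mathcal{O}(1/\sqrt{\delta})$ walk steps, and wrap this inside $\mathcal{O}(1/\sqrt{\epsilon})$ rounds of randomized amplitude amplification. The only small imprecision is that your error-budgeting via $\eta = \mathrm{poly}(\epsilon)$ yields a $\log(1/\epsilon)$ overhead that the original MNRS analysis avoids through a tighter error-propagation argument, but this is immaterial under the paper's $\tilde{\mathcal{O}}$ accounting.
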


A further quantum routine we make use of is Quantum Dynamic Programming. 
First introduced by Ambainis et al.~\cite{ambainis_quantum_2018} on the Boolean hypercube, it was extended by Glos et al.~\cite{glos_quantum_2021} to the $k$-dimensional lattice graph with vertices in $\{1,2,\ldots,n\}^k$.
As with classical Dynamic Programming, the algorithm is applicable to problems that have the \emph{optimal substructure} property: optimal solutions can be constructed from optimal solutions of sub-problems.
The algorithm is presented for a ``path in the hyperlattice'' problem, but it applies equally to any classical dynamic programming, replacing the Grover search with quantum minimum/maximum finding as necessary.
The quantum algorithm is recursive and uses nested quantum searches at each layer of recursion.
While the classical query complexity is $\tilde{\Theta}\left(n^k\right)$, the quantum algorithm has complexity $\bigotilde{T(n)^k}$ for some function $T(n)$ satisfying $n/e \leq T(n) < n$ for every $n$.
Although this is not a polynomial speedup in the size of the problem, it still represents a non-trivial improvement in the complexity of the algorithm.
We state this result in the following Theorem.
\begin{theorem}[Quantum Dynamic Programming~\cite{glos_quantum_2021}]
\label{thm:qdp}
    There is a quantum algorithm that solves a problem with optimal substructure on a $k$-dimensional lattice with vertices in $\{1,2,\ldots,n\}^k$ in time $\bigotilde{T(n)^k}$ for some $T(n) < n + 1$.
\end{theorem}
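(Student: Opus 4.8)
\textit{Proof proposal.} The plan is to build a recursive quantum algorithm that follows the skeleton of the classical DP but replaces each exhaustive optimization over intermediate states by quantum minimum-finding (in its decision form, Grover search, \Cref{thm:grover}), and then to tame the cost of the resulting nested searches by precomputing a suitably sized family of small sub-problems classically and caching their solutions in QRAM. I will describe it for the concrete ``shortest monotone path in the lattice $\{1,\dots,n\}^k$'' instantiation mentioned in the statement; the abstract optimal-substructure hypothesis enters only through the fact that the optimum on a box decomposes through the optima on two halves of that box.

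First, recall the classical baseline: evaluating the DP table in topological order costs $\bigotilde{1}$ per cell, hence $\bigotilde{n^k}$ overall, which is also the classical query complexity against a weight oracle. I would organise the computation around \emph{box sub-problems}: for an axis-aligned sub-box $B=\prod_i[a_i,b_i]$, let $F(B)$ record the optimal path costs across $B$ between boundary cells on opposite faces (in particular, from the lower corner of $B$ to every cell of $B$). Splitting $B$ at the midpoint of its longest axis yields halves $B_-,B_+$ sharing a separating face $H$ with $\abs{H}=\abs{B}/(\text{longest side})$, and optimal substructure lets one reconstruct $F(B)$ from $F(B_-),F(B_+)$ by minimizing, for each target $t$, over the cell $w\in H$ through which the optimal crossing occurs. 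The quantum step evaluates this minimum by quantum minimum-finding over $H$ at cost $\bigotilde{\sqrt{\abs{H}}}$ times the cost of producing the two operands on demand -- and producing each operand is a recursive call of the same kind.

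Unrolled all the way to the bottom this recursion is super-polynomially expensive, because each level multiplies the branching by a square root of a face size. Following the Boolean-hypercube algorithm of Ambainis et al.\ (and its lattice version by Glos et al.), the fix is to halt the recursion as soon as a box's shape drops below a threshold controlled by a parameter $\alpha$, to compute $F(\cdot)$ for every base box classically by ordinary DP, and to store the results in QRAM; above the threshold one runs the $\bigo{\log n}$-deep tree of nested quantum searches of the previous step. Both relevant quantities are explicit in $\alpha$ and $k$: the number of base boxes produced by repeated midpoint splitting, times the classical DP cost per base box, is one function of $\alpha$; the telescoped product of $\sqrt{\text{face size}}$ factors down the recursion is another. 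Choosing $\alpha$ to equalize the two gives a bound of the form $\bigotilde{T(n)^k}$, where $T(n)$ is an ``effective side length'': Grover's quadratic speedup strictly shrinks the per-dimension base below $n$, so $T(n)<n+1$ with room to spare (and, pushing the optimization, a genuine constant-fraction improvement, capped below by $n/e$ just as the Boolean base drops from $2$ to $\approx 1.817$).

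The step I expect to be the main obstacle is exactly the bookkeeping in the last two steps: one must verify that the family of sub-boxes that actually occur under repeated midpoint splitting is small enough for precomputing and caching $F(\cdot)$ over all of them to fit inside the $\bigotilde{T(n)^k}$ budget, while \emph{at the same time} the product of square-root face-size factors along an $\bigo{\log n}$-deep recursion telescopes to the same budget -- the two estimates must be made to meet, and it is this matching that pins down the admissible $T(n)$. A secondary technical point is error control: each nested minimum-finding call is correct only with constant probability, so either every level is boosted by $\bigo{\log n}$ repetitions (absorbed into the $\tilde{\mathcal O}$) or one invokes a version of quantum search that tolerates a bounded-error oracle, keeping the overall success probability bounded away from zero.
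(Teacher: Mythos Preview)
The paper does not prove \Cref{thm:qdp}; it is stated as a cited result from Glos et al.\ and invoked as a black box in Section~4.3, so there is no ``paper's own proof'' to compare your proposal against. Your sketch is a faithful outline of the actual Glos et al.\ argument (recursive box splitting, nested quantum minimum-finding over separating faces, classical precomputation of base boxes into QRAM, and balancing a threshold parameter to equalize the two costs), so as a standalone proof plan it is on the right track; but for the purposes of this paper no proof is expected or given.
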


Finally, we will often rely on performing classical computing as a subroutine.
In particular, we will want to implement oracles $O_f$ for non-trivial classical functions $f$.
It was shown by Bennett~\cite{bennett_logical_1973} that deterministic classical, irreversible algorithms may be efficiently converted to reversible ones.
This was extended by Zuliani~\cite{zuliani_logical_2001} to include nondeterministic and probabilistic computations.
This general-purpose result allows for the efficient simulation of classical computation on a quantum computer.

\section{Quantum Algorithms for the Closest String Problem}
\label{sec:q_csp}
We present 3 algorithms for the CSP.
The first is a straightforward nested application of quantum minimum/maximum finding, and is applicable for any desired distance metric.
It is suited to the case where there are very many input strings, each of which is not too long. 
This may be the case, for example, when seeking a conserved fragment of a gene among all the different strains of pathogens such as SARS-CoV-2. 

The second is a quantisation of a slightly modified version of the algorithm due to Gramm et al.~\cite{gramm_fixed-parameter_2003} for CSP under the Hamming distance.
It is best suited to the setting where all the input strings are closely related, such as representing the DNA data of individuals from the same species.
Interestingly, for a fixed distance threshold, the runtime scales only linearly with the length of input strings and number of input strings.

Finally, the third algorithm is a quantisation of the dynamic programming algorithm due to Nicolas and Rivals~\cite{nicolas_hardness_2005} using the results of Glos et al.~\cite{glos_quantum_2021}.
It is suited to cases where there are few input strings and no bound on the threshold distance is known -- at present, this is most likely of theoretical interest only.

\subsection{A Trivial Quantum Search Algorithm}
Perhaps the most straightforward method to solve the CSP is a nested search algorithm.
The outer loop is a quantum minimum finding that runs over all length-$n$ strings $s \in \Sigma^n$. The function to be minimised is the maximum distance between $s$ and any of the input strings $s_1,\ldots,s_k$.
Instead of checking each of these distances sequentially, we use a quantum maximum finding.
For any $s$, the search runs over indices $i = 1,\ldots, k$ with search function the distance between $s$ and $s_i$.

Formally, we may write the search functions:
\begin{alignat}{7}
    f: && \ \Sigma^{n} &\rightarrow \mathbb{Z}_+ && && \qquad \qquad g_s:& \ [k] &\rightarrow \mathbb{Z}_+, \\
    && s &\xrightarrow{} \max_{i \in [k]} d(s,\,s_i) &&&&
    &i &\xrightarrow{} d(s_,\,s_i) . \nonumber
\end{alignat}

The inner search of $g_s$ runs in time $T_I = \bigotilde{\sqrt{k} \cdot T_d}$, where $T_d$ is the time required to compute $d(s,s')$ for any length-$n$ strings $s,\,s'$.
In particular, $T_d = \bigo{n}$ for the Hamming distance and $T_d = \bigo{n^2}$ for the Levenshtein distance.
The outer search of $f$ runs in time $\bigotilde{\sqrt{\sigma^n} \cdot T_I} = \bigotilde{\sigma^{n/2} \cdot \sqrt{k} \cdot T_d}$. 
We therefore obtain our first result,~\Cref{prop:csp_trivial}:
\csptrivial*

The runtime will usually be dominated by the search over all length-$n$ strings.
However, it has very favourable scaling with respect to $k$, and could therefore be feasible in very high data volume scenarios.

In the classical setting, it can be shown that exhaustive search is essentially optimal for the CSP under Hamming distance in the binary case~\cite{abboud_can_2023}, assuming SETH holds.
The proof follows via a reduction from $(q,\,K)$-CNF  SAT, i.e. the problem of finding satisfying assignments to $q$-ary CNF formulae, to the \emph{Remotest String Problem} (RSP), a natural analogue of the CSP where the goal is to find a string far from all inputs.
For binary alphabets, the CSP and the RSP are equivalent by flipping all bit strings in the inputs.
We restate the result in~\Cref{thm:reduction}:

\begin{theorem}[Based on Theorems 1.1 \& 1.2 from~\cite{abboud_can_2023}]
    \label{thm:reduction}
    An algorithm that solves the CSP for binary alphabets under Hamming distance in time $\bigo{2^{(1-\epsilon)n}\poly(k)}$ implies the existence of an algorithm for $(q,\,K)$-SAT on $N$ variables with $M$ clauses which runs in time $\bigo{2^{(1-\epsilon/2)N} \poly(M)}$.
\end{theorem}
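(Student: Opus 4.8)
The plan is to chase the standard SETH-to-CSP reduction, tracking parameters carefully so that a hypothetical $\bigo{2^{(1-\epsilon)n}\poly(k)}$ algorithm for binary CSP under Hamming distance collapses to a sub-$2^N$ algorithm for $(q,K)$-SAT. First I would recall the split-and-list trick: given a $(q,K)$-CNF formula $\phi$ on $N$ variables with $M$ clauses, partition the variables into two halves $V_1, V_2$ of size $N/2$ each. For each of the $2^{N/2}$ partial assignments $\alpha$ to $V_1$, build a binary string $a_\alpha$ of length roughly $M$ (one coordinate per clause, plus perhaps a few padding coordinates) that records, for each clause, whether $\alpha$ already satisfies it; symmetrically build strings $b_\beta$ for the $2^{N/2}$ partial assignments $\beta$ to $V_2$. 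The construction is arranged so that $\phi$ has a satisfying assignment iff some pair $(\alpha,\beta)$ has $a_\alpha$ and $b_\beta$ ``compatible'' in every clause-coordinate, which one encodes as a Hamming-distance condition: $d_H(a_\alpha, \bar{b}_\beta) \le d$ (equivalently, using the RSP$\leftrightarrow$CSP bit-flip equivalence noted in the excerpt, a remotest-string condition).

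The heart of the argument is turning this ``find a close pair among two lists'' problem into a single CSP instance. Here I would use the gadget from Abboud et al.: take the $k = 2^{N/2}$ strings $\{a_\alpha\}$ together with auxiliary ``selector'' coordinates (of length $O(\log k) = O(N)$) so that the $2^{N/2}$ strings $\{b_\beta\}$ each become a candidate center string, and the requirement ``center is within distance $d$ of all $k$ input strings'' forces the center to encode one $b_\beta$ and forces that $b_\beta$ to be compatible with every $a_\alpha$ — or rather, one runs the reduction so that the CSP instance is satisfiable iff there exists a $\beta$ that is simultaneously good against all $\alpha$, which is exactly a satisfying assignment. The string length of the CSP instance is $n = O(M) + O(\log k) = O(M + N) = \poly(M)$ — crucially \emph{not} growing with $k$ in its exponent — and the number of strings is $k = 2^{N/2}$. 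Running the hypothesised algorithm gives time $\bigo{2^{(1-\epsilon) n}\poly(k)} = \bigo{2^{(1-\epsilon)\cdot O(M)} \cdot 2^{O(N)}}$; with a slightly more careful length bound, namely $n = N/2 + o(N)$ worth of ``real'' coordinates after the split (the padding and selector coordinates must be made lower-order, which is the technical crux), this becomes $\bigo{2^{(1-\epsilon)N/2}\poly(M)} \cdot \poly(M) = \bigo{2^{(1-\epsilon/2)N}\poly(M)}$, contradicting SETH.

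The step I expect to be the main obstacle is exactly this length accounting: the naive encoding uses one coordinate per clause, giving $n = \Theta(M)$, which would only yield a $2^{(1-\epsilon)\Theta(M)}$ bound — useless, since $M$ can be much larger than $N$. The fix, following Abboud et al., is to hash or bucket the $M$ clauses into $O(N)$ groups and encode group-wise compatibility, or to observe that only an $o(N)$-length ``fingerprint'' per half-assignment is needed because the number of distinct assignment-vs-clause interaction patterns is controlled; one must verify that the compatibility condition survives this compression and that the resulting distance threshold $d$ and alphabet (binary) are preserved. A secondary technical point is handling the $(q,K)$-SAT generalisation (bounded clause width $K$ but $q$-ary clauses) rather than plain $3$-SAT, but since the reduction only ever reads whether a clause is satisfied by a partial assignment, the clause structure enters only through the per-clause bit, so the width/arity plays no role beyond polynomial factors. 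I would then cite Theorems 1.1 and 1.2 of~\cite{abboud_can_2023} for the verified parameter bookkeeping and present the above as the reduction sketch, noting that~\Cref{thm:reduction} follows by combining their remotest-string reduction with the binary RSP$\leftrightarrow$CSP equivalence.
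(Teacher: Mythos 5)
Your split-and-list construction is incompatible with the form of the theorem you need to prove. You produce a CSP instance with $k = 2^{N/2}$ input strings, but the hypothesis only furnishes a $\bigo{2^{(1-\epsilon)n}\poly(k)}$ algorithm where $\poly(k)$ is an arbitrary, unspecified-degree polynomial. With $k = 2^{N/2}$ this factor is $2^{\Theta(N)}$ with no control on the hidden constant: a $k^2$ term alone already costs $2^{N}$, so even granting your compressed length $n = N/2 + o(N)$, the total is at least $2^{(3/2-\epsilon/2)N}$, which is slower than brute force for SAT and contradicts nothing. For the conclusion $\bigo{2^{(1-\epsilon/2)N}\poly(M)}$ to be reachable at all, the constructed instance must have $k = \poly(M)$ (so that $\poly(k) = \poly(M)$) and string length $n \leq (1+o(1))N$; then $2^{(1-\epsilon)n}\poly(k) \leq 2^{(1-\epsilon+o(1))N}\poly(M) \leq 2^{(1-\epsilon/2)N}\poly(M)$ once $N$ is large. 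This rules out any two-list encoding whose list length is exponential in $N$. Two further slips compound the problem: in your last step you silently replace $\poly(2^{N/2})$ by $\poly(M)$, which is false, and you equate $2^{(1-\epsilon)N/2}$ with $2^{(1-\epsilon/2)N}$, dropping a $2^{N/2}$ factor in the exponent; these would only reconcile under the unstated and unjustified assumption that the polynomial in $k$ is linear.

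For context, the paper does not actually re-derive the reduction. It invokes Theorems~1.1 and~1.2 of Abboud et al.\ as a black box for the SETH-hardness of the \emph{Remotest String Problem} (a direct construction with polynomially many input strings and string length close to $N$, not an orthogonal-vectors-style list intersection), and then adds only the one-line observation that over a binary alphabet complementing the input strings gives a parameter-preserving bijection between RSP and CSP instances. A correct reconstruction would therefore need to recover that direct, polynomial-$k$ encoding rather than the split-and-list route you sketched.
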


The framework of Buhrman et al.~\cite{buhrman_framework_2021} allows us to make the \emph{Basic $(q,K)$-CNF SAT QSETH} conjecture:
\begin{conjecture}[Basic $(q,K)$-CNF SAT QSETH]
    A quantum algorithm cannot, given an input $C$ from the set of $(q,K)$-CNF formulae on $N$ variables and $M$ clauses, decide whether there exists $x \in [q]^{N}$ such that $C(x) = 1$ in time $\bigo{q^{\frac{N}{2}(1-\delta)} \poly(M)}$, for any $\delta > 0$. 
\end{conjecture}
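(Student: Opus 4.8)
\emph{Proof proposal.}
A conjecture of this form cannot be established outright; the plan is instead to \emph{reduce} it to the more fundamental Basic-QSETH for ordinary (binary) CNF-SAT, which is the base assumption of the Buhrman et al.\ framework, by adapting the standard alphabet-encoding arguments that classically let one pass between Boolean and $q$-ary constraint satisfaction. I would argue the contrapositive: a quantum algorithm deciding $(q,K)$-CNF-SAT on $N$ variables and $M$ clauses in time $\bigo{q^{\frac{N}{2}(1-\delta)}\poly(M)}$ for some $\delta>0$ yields a quantum algorithm for ordinary $K'$-CNF-SAT that beats the Grover exponent by a constant factor, contradicting Basic-QSETH.

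The clean case is $q = 2^t$. Here a $K$-CNF instance on $N'$ Boolean variables is repackaged, $t$ bits at a time, into an equivalent instance on $N = \lceil N'/t\rceil$ $q$-ary variables: each clause depends on at most $K$ bits, hence on at most $K$ of the blocks, and the induced constraint on those blocks is rewritten as a conjunction of at most $q^{K}$ legal $q$-ary clauses of arity $\le K$ (a constant-factor blow-up in $M$). A hypothetical $q^{\frac{N}{2}(1-\delta)}\poly(M)$ algorithm then runs in time $2^{\frac{N'}{2}(1-\delta)}\poly(M')$ on the original instance (the ceiling contributes only a constant factor), directly contradicting Basic-QSETH for the same $\delta$. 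Thus for $q$ a power of two the conjecture is essentially a restatement of Basic-QSETH, and this is the version I would present first.

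For general $q$ the encoding must be amortized: partition the Boolean variables into blocks of $s$ original variables which are jointly encoded by $\lceil s/\log_2 q\rceil$ $q$-ary variables, so that $N \le N'/\log_2 q + O(N'/s)$ while each $K$-clause still touches only $O(Ks)$ $q$-ary variables, and each induced constraint is again expressible as a constant number of legal clauses of arity $O(Ks)$. Pushing the hypothetical algorithm through now costs $2^{\frac{N'}{2}\cdot\frac{1-\delta}{1-O(1/(s\log_2 q))}}\poly(M')$; choosing the constant $s$ large enough in terms of $\delta$ (and $q$) makes this exponent at most $\frac{N'}{2}(1-\delta/2)$, again contradicting Basic-QSETH — at the cost that the Boolean clause width $K'=O(Ks)$ grows with $1/\delta$, which is harmless provided Basic-QSETH is assumed for every constant clause width.

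The main obstacle is twofold. First, the ``for all $\delta>0$'' quantifier: as just noted, shrinking $\delta$ forces the encoding block size $s$, and hence the arity of the Boolean instances produced, to grow, so one must check carefully that the form of Basic-QSETH being invoked indeed quantifies over all (or unboundedly large) clause widths — this is the standard formulation, but should be stated precisely. Second, and more delicate, Buhrman et al.\ phrase QSETH at the level of \emph{compression-oblivious properties} rather than bare decision problems; I would therefore either invoke the plain decision-problem corollary of Basic-QSETH that they also record, observing that the reduction above is a deterministic polynomial-time many-one reduction with only a constant-factor blow-up in the number of variables and clauses (so the quantum time lower bound transfers verbatim), or, to stay inside the property framework, verify that the relevant (OR-)property is preserved under the encoding. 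Pinning down the precise notion of a ``$(q,K)$-CNF formula'' used in the statement — what counts as a literal, and whether the blown-up constraints remain legal clauses after padding $q$ and the arity to constants — is the remaining bookkeeping.
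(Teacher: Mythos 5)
This statement is a \emph{conjecture}: the paper does not prove it, and offers no derivation beyond observing that the Buhrman et al.\ framework ``allows us to make'' it as a $q$-ary instance of their Basic-QSETH template. There is therefore no proof in the paper to compare yours against, and your opening sentence correctly identifies the only sensible goal --- reducing the $q$-ary conjecture to the binary one rather than proving it outright.

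What you propose is genuinely different from (and more ambitious than) what the paper does: you would replace a bespoke assumption with a consequence of the standard binary Basic-QSETH via an alphabet-encoding reduction. The reduction is in the right direction (binary instances are encoded into $q$-ary ones, so a fast $q$-ary algorithm yields a fast binary one), the exponent accounting $q^{N/2}\approx 2^{N'/2}$ is correct, and the three caveats you flag --- clause width growing with $1/\delta$, the compression-oblivious-property formulation of QSETH versus its plain decision-problem corollary, and what counts as a legal $(q,K)$-clause --- are exactly the points on which the argument could founder. A few minor corrections if you write this up: in the general-$q$ case the relative loss in the exponent is $\bigo{\log_2(q)/s}$, not $\bigo{1/(s\log_2 q)}$ (you need $s \gtrsim \log_2(q)/\delta$); when $q$ is not a power of two you must also forbid or canonicalize the ``junk'' block values so that $q$-ary satisfying assignments correspond to Boolean ones; and the per-clause blow-up is at most $q^{\bigo{Ks}}$ induced clauses, which is a constant only because $q$, $K$, $s$ are constants. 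If carried through, your reduction would strengthen the paper by putting \Cref{prop:csp_trivial}'s tightness claim on a more standard footing; as it stands, the paper simply assumes the $q$-ary statement.
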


Using the same reduction that leads to~\Cref{thm:reduction}, we see that~\Cref{prop:csp_trivial} is tight for binary alphabets unless Basic $(q,K)$-CNF SAT QSETH is false:

\begin{theorem}
    No quantum algorithm solves the binary alphabet CSP in time $\bigo{2^{(1-\epsilon)\frac{n}{2}} \poly(k)}$ unless Basic $(q,K)$-CNF SAT QSETH is false.
\end{theorem}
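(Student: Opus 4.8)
The plan is to argue the contrapositive: assuming a quantum algorithm $\mathcal{A}$ that solves the binary-alphabet CSP under Hamming distance in time $\bigo{2^{(1-\epsilon)n/2}\poly(k)}$ for some fixed $\epsilon > 0$, I would build a quantum algorithm for $(q,K)$-CNF SAT whose running time contradicts Basic $(q,K)$-CNF SAT QSETH. The engine is the reduction of Abboud et al.~\cite{abboud_can_2023} underlying~\Cref{thm:reduction}, but one conceptual point must be handled with care: \Cref{thm:reduction} cannot be invoked as a black box. Since $\mathcal{A}$ is in particular an algorithm of runtime $\bigo{2^{(1-\epsilon)n}\poly(k)}$, that theorem only certifies a SAT algorithm of runtime $\bigo{2^{(1-\epsilon/2)N}\poly(M)}$, whose exponent scales like $N$ and therefore does \emph{not} beat the QSETH bound $\bigo{q^{\frac{N}{2}(1-\delta)}\poly(M)}$, whose exponent scales like $N/2$. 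Instead one has to re-open the reduction and propagate the fact that $\mathcal{A}$'s exponent is already \emph{halved}; this halving then carries over to the SAT exponent.

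Concretely, I would first unpack the reduction of~\cite{abboud_can_2023}: on input a $(q,K)$-CNF formula $C$ on $N$ variables and $M$ clauses, it produces in $\poly(N,M)$ time a polynomial-time-computable family $\{I_i\}_{i \in \mathcal{I}}$ of binary CSP instances under Hamming distance (it is stated for the Remotest String Problem, which coincides with binary CSP after flipping all bits; the index set $\mathcal{I}$ enumerates partial assignments to a constant fraction of the variables, which is where the $\epsilon \mapsto \epsilon/2$ loss in~\Cref{thm:reduction} originates), each with string length $n$ and $k = \poly(M)$ strings, such that $C$ is satisfiable if and only if some $I_i$ admits a center string within its threshold; the classical algorithm behind~\Cref{thm:reduction} loops over $\mathcal{I}$ and calls a CSP solver once per instance, so that $\abs{\mathcal{I}} \cdot 2^{(1-\epsilon)n}\poly(k) = \bigo{2^{(1-\epsilon/2)N}\poly(M)}$. (If the reduction is in fact a single many-one map, with $\abs{\mathcal{I}} = 1$ and $n$ scaled accordingly, what follows only simplifies.) I would then quantize it: replace the classical loop by a single Grover search (\Cref{thm:grover}) whose marked indices are the $i$ for which $\mathcal{A}$, run on $I_i$, returns a valid center string, constructing $I_i$ from $i$ coherently and reversibly via the standard simulation of classical computation~\cite{bennett_logical_1973}, and first driving $\mathcal{A}$'s failure probability below $1/\poly(NM)$ by $\bigo{\log(NM)}$-fold repetition with verification of the returned candidate, so that it composes safely inside the search. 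The cost is then $\bigotilde{\sqrt{\abs{\mathcal{I}}}} \cdot 2^{(1-\epsilon)n/2}\poly(k,N,M) = \bigo{\sqrt{\abs{\mathcal{I}} \cdot 2^{(1-\epsilon)n}\poly(k)}\cdot\poly(N,M)} = \bigo{2^{(1-\epsilon/2)N/2}\poly(M)}$, using the accounting above and $\poly(k) = \poly(M)$. Since $\bigo{2^{(1-\epsilon/2)N/2}\poly(M)}$ is $\bigo{q^{\frac{N}{2}(1-\delta)}\poly(M)}$ with $\delta = \epsilon/2 > 0$ for every $q \geq 2$ (with equality when $q = 2$), this is a quantum $(q,K)$-SAT algorithm of precisely the form ruled out by Basic $(q,K)$-CNF SAT QSETH, so the conjecture is false. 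Together with~\Cref{prop:csp_trivial} at $\sigma = 2$, this shows that the $2^{n/2}$ scaling is optimal.

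The main obstacle is the unpacking step: one must check, against the actual reduction of~\cite{abboud_can_2023}, that it quantizes cleanly — that its only superpolynomial cost is the enumeration over $\mathcal{I}$ together with the per-instance CSP calls (so that replacing the former by a Grover search and the latter by $\mathcal{A}$ genuinely square-roots the whole running time), and that this enumeration is disjunctive and non-adaptive, so that one Grover search rather than a sequence of adaptive searches suffices. Verifying this, and that $n$, $k$ and the $\epsilon/2$ all propagate exactly as in~\Cref{thm:reduction}, is bookkeeping-intensive but not conceptually deep; nesting the bounded-error routine $\mathcal{A}$ inside Grover is handled by the error-reduction step above, and the identity $2^{(1-\epsilon/2)N/2} = q^{\frac{N}{2}(1-\delta)}$ for $q = 2$ needs nothing more than the observation that a square root exactly absorbs the factor of two in the exponent.
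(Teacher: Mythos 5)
Your proposal is essentially the intended argument, and it contains a valuable explicit observation that the paper leaves implicit. The paper does not supply a proof of this theorem at all; it merely prefaces it with ``Using the same reduction that leads to~\Cref{thm:reduction}.'' You correctly diagnose that this phrase cannot mean ``invoke~\Cref{thm:reduction} as a black box'': a quantum CSP algorithm of cost $\bigo{2^{(1-\epsilon)n/2}\poly(k)}$ is, in particular, an algorithm of cost $\bigo{2^{(1-\epsilon')n}\poly(k)}$ with $\epsilon' = (1+\epsilon)/2$, and feeding this into~\Cref{thm:reduction} yields a quantum SAT algorithm of cost $\bigo{2^{(3-\epsilon)N/4}\poly(M)}$, whose exponent exceeds $N/2$ for every $\epsilon < 1$ and so fails to violate the conjecture. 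Re-opening the reduction and quantizing the outer enumeration with a Grover search, thereby square-rooting the \emph{entire} classical runtime rather than just the CSP subroutine, is the right mechanism, and your exponent accounting $\sqrt{\abs{\mathcal{I}}}\cdot 2^{(1-\epsilon)n/2}\poly(k) = \bigo{2^{(1-\epsilon/2)N/2}\poly(M)}$ lands exactly on $q^{\frac{N}{2}(1-\delta)}\poly(M)$ at $q=2$, $\delta = \epsilon/2$.

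Two minor caveats, both of which you yourself flag. First, the correctness of the Grover step rests on structural properties of the Abboud et al.\ reduction --- that the enumeration over $\mathcal{I}$ is disjunctive and non-adaptive, and that the instance $I_i$ can be constructed coherently from $i$ at polynomial cost --- which neither the paper nor your write-up verifies against the source; since the paper itself only gestures at ``the same reduction,'' this is an equal omission on both sides, but it is the only place where your argument could genuinely fail. Second, your parenthetical ``(with equality when $q=2$)'' is slightly imprecise: equality in the exponent holds at $q=2$ and $\delta = \epsilon/2$, but the conclusion needs a strict inequality, which you get by taking any $\delta \in (0,\epsilon/2)$; this is cosmetic and does not affect the argument. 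Overall your proposal is a correct and more careful rendering of what the paper asserts without proof.
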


We must therefore look to parameterized algorithms to achieve any meaningful speedups.

\subsection{A Quantum Random Walk Algorithm for the CSP under Hamming Distance}
We present a quantisation of the algorithm due to Gramm et al.~\cite{gramm_fixed-parameter_2003}, which we briefly describe here.
The algorithm is specifically designed for the setting where the threshold distance $d$ is small.

As a pre-processing step, note that any position where all $k$ strings agree can be removed.
If more than $kd$ positions are remaining, then no solution can exist, and the algorithm terminates.
This pre-processing takes time $\bigo{kn}$, which is usually negligible.

The core algorithm is based on the simple observation that if a solution $\hat{s}$ exists with a threshold distance $d$, then certainly $d_H(\hat{s}, s_1) \leq d$.
Immediately, the search space is reduced to a ball of radius $d$ around $s_1$.
Rather than explore the whole space exhaustively, we consider iteratively modifying a candidate string $s$, which is initialised as $s = s_1$.
At each step, we find some minimal index $i^*$ for which $d_H(s, s_{i^*}) > d$.
We then consider the first $d+1$ positions where $s\neq s_{i^*}$, and for each such position, we form a new candidate string by replacing the character in $s$ with the character in $s_{i^*}$.
Since, by hypothesis, we have $d_H(\hat{s}, s_{i^*}) \leq d$, there are at most $d$ positions where both $s \neq s_{i^*}$ and $\hat{s} \neq s_{i^*}$.
Therefore, at least one of the modifications must decrease $d_H(s, \hat{s})$, therefore moving the candidate solution closer to the true solution.
After $d$ such rounds, a solution must be found if one exists.

The algorithm can be formulated recursively, with at most $d$ layers of recursion.
As a further optimization, we note that if, at any layer, the candidate solution's distance is strictly greater than $d$ plus the number of rounds remaining, then no solution is possible.
The algorithm is described by the pseudocode in~\Cref{code:treesearch}.
The inputs at the first level of recursion are $s = s_1$ and $\Delta d = d$.
\algrenewcommand\algorithmicensure{\textbf{Output:}}
\begin{algorithm}
\caption{\texttt{TreeSearch}$(s,\,\Delta d)$~\cite{gramm_fixed-parameter_2003}}\label{code:treesearch}
\begin{algorithmic}[1]
\Require \textbf{(Global Variables)} $s_1,\,\ldots,\,s_k \in \Sigma^{kd}$, threshold distance $d$.
\Require Candidate string $s$ and integer $\Delta d$.
\Ensure String $\hat{s}$ with $\max_{i}d_H(\hat{s},\,s_i) \leq d$ and $d_H(\hat{s},\,s) \leq \Delta d$, if one exists.
\If{$\Delta d < 0$}
    \State \Return ``Not found''
\EndIf
\If{$d_H(s,\,s_i) > d + \Delta d$ for some $i \in [1\,..\,k]$}
    \State \Return ``Not found''
\EndIf
\If{$d_H(s,\,s_i) \leq d$ for all $i \in [1\,..\,k]$}
    \State \Return $s$
\EndIf
\State $i^* \gets \min \{ i : d_H(s,\,s_i) > d \}$
\State $P \gets \{ p : s[p] \neq s_{i^*}[p] \}$
\State $P' \gets P[1\,..\,d+1]$
\ForAll{$p \in P'$}
    \State $s' \gets s$
    \State $s'[p] \gets s_{i^*}[p]$
    \State $s_{\text{new}} \gets \texttt{TreeSearch}(s',\,\Delta d - 1)$
    \If{$s_{\text{new}} \neq$ ``Not found''}
        \State \Return $s_{\text{new}}$
    \EndIf
\EndFor
\State \Return ``Not found''
\end{algorithmic}
\end{algorithm}

The algorithm, in the worst case, searches over a \emph{perfect $(d+1)$-ary tree} $\mathcal{T}$: a tree where all interior nodes have $d+1$ children and all leaves are at the same depth. 
The depth of the tree is $d+1$.
We therefore call the algorithm \texttt{TreeSearch}.
Gramm et al. compute the runtime of the algorithm, which we restate in~\Cref{thm:treesearch}.
\begin{theorem}[\cite{gramm_fixed-parameter_2003}]
\label{thm:treesearch}
   \texttt{TreeSearch}, described in~\Cref{code:treesearch}, with (pre-processed) global variables $s_1,\ldots, s_k$, threshold distance $d$ and inputs $s = s_1$ and $\Delta d = d$ solves the CSP under the Hamming distance in time $\bigo{k\cdot d^{d+1}}$.
\end{theorem}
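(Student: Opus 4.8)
The plan is to analyze \Cref{code:treesearch} as a depth-bounded tree search. First I would make the recursion tree explicit: each invocation of \texttt{TreeSearch}$(s,\Delta d)$ is a node, and the recursive calls it spawns inside the \texttt{ForAll} loop are its children. Two structural facts bound this tree. The branching factor is at most $d+1$, because $P' = P[1\,..\,d+1]$ has at most $d+1$ elements by construction. The depth is at most $d+1$, because the argument $\Delta d$ decreases by exactly one from parent to child, it equals $d$ at the root, and the call returns immediately as soon as $\Delta d < 0$; hence along any root-to-leaf path $\Delta d$ runs through $d, d-1, \dots, 1, 0, -1$. Consequently the whole recursion tree is a subtree of the perfect $(d+1)$-ary tree of depth $d+1$, and I would bound the cost by the cost of exploring that perfect tree.

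Next I would count nodes. The total number of nodes in the perfect tree is $\sum_{j=0}^{d+1}(d+1)^j = \bigl((d+1)^{d+2}-1\bigr)/d$, and since $(1+1/d)^{d+1}$ is bounded by a constant for all $d \ge 1$ (it decreases to $e$), this is $\bigo{(d+1)^{d+1}} = \bigo{d^{d+1}}$. Among these, the nodes that actually branch are exactly those with $\Delta d \ge 0$, i.e.\ those at the top $d+1$ levels, and there are only $\sum_{j=0}^{d}(d+1)^j = \bigo{d^d}$ of them; the remaining $\bigo{d^{d+1}}$ nodes are the $\Delta d = -1$ leaves (together with any early-termination leaves at shallower depth), each of which performs only $\bigo{1}$ or $\bigo{k}$ work.

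The remaining ingredient is the per-node cost, and this is where care is needed to land on $\bigo{k\cdot d^{d+1}}$ rather than something larger. The key is to carry the vector of distances $\bigl(d_H(s,s_1),\dots,d_H(s,s_k)\bigr)$ along the recursion and update it incrementally (its initial value at the root being part of the $\bigo{kn}$ preprocessing): when a child $s'$ differs from $s$ in a single position $p$, each entry of the vector changes by at most one, determined only by whether $s[p]$, $s'[p]$, $s_i[p]$ coincide, so the whole vector is updated in $\bigo{k}$ time, and the candidate string itself should be edited in place and the edit undone after the recursive call returns, rather than copied. With the distance vector available, the tests on lines~4 and~7 and the computation of $i^*$ on line~9 each cost $\bigo{k}$; computing $P'$ and running the loop costs $\bigo{k}$ per child for the distance update, i.e.\ $\bigo{kd}$ at a branching node (plus an $\bigo{|s|}$ scan to form $P$, which is no larger). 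Charging $\bigo{k}$ to each of the $\bigo{d^{d+1}}$ nodes and the extra $\bigo{kd}$ to each of the $\bigo{d^d}$ branching nodes yields a total of $\bigo{k\,d^{d+1}} + \bigo{k\,d^{d+1}} = \bigo{k\,d^{d+1}}$. Correctness is precisely the invariant sketched in the discussion before the pseudocode: if a solution $\hat s$ with $d_H(\hat s,s_i)\le d$ for all $i$ exists and $d_H(s,\hat s)\le\Delta d$, then since $d_H(\hat s, s_{i^*})\le d < |P|$ at least one $p\in P'$ has $s_{i^*}[p]=\hat s[p]\neq s[p]$, so replacing $s[p]$ by $s_{i^*}[p]$ gives a child with $d_H(\cdot,\hat s)\le\Delta d-1$; the line-5 pruning is sound by the triangle inequality; and the invariant bottoms out correctly at $\Delta d=0$.

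I expect the structural and counting steps to be routine. The one genuinely delicate point is the per-node bookkeeping: one must verify that the incremental distance maintenance and the in-place candidate edits keep the per-node cost at $\bigo{k}$ (plus $\bigo{kd}$ at branching nodes), since a naive reimplementation that recomputes all $k$ distances from scratch, or copies the length-$kd$ candidate for each child, would inflate the bound to $\bigo{k^2 d^{d+1}}$ or $\bigo{k\,d^{d+2}}$.
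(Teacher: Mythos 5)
The paper itself does not prove this theorem: it is presented as a restatement of a result from Gramm, Niedermeier and Rossmanith~\cite{gramm_fixed-parameter_2003}, with the surrounding text saying ``Gramm et al.\ compute the runtime of the algorithm, which we restate in~\Cref{thm:treesearch}.'' Your proposal therefore supplies a self-contained argument where the paper offers only a citation, and it is correct in all the points that matter: bound the recursion by a perfect $(d{+}1)$-ary tree, and keep the per-node work to $\bigo{k}$ plus $\bigo{kd}$ at branching nodes by maintaining the distance vector incrementally and editing the candidate string in place. You are right that this bookkeeping is the genuinely delicate part --- a naive re-implementation that re-scans all $k$ strings or copies the length-$\bigo{kd}$ candidate per child would lose a factor of $k$ or $d$ --- and the paper's prose (which only sketches correctness, not the runtime) gives no hint of it.

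One small over-count worth fixing. You assert the depth is $d+1$ with $\Delta d$ running from $d$ down to $-1$, and that the branching nodes are ``exactly'' those with $\Delta d\ge 0$. In fact, when $\Delta d=0$ the test on line~4 ($\exists i:\ d_H(s,s_i)>d+\Delta d$) and the test on line~7 ($\forall i:\ d_H(s,s_i)\le d$) are mutually exhaustive, so one of them always fires and the call returns without recursing; the recursion never reaches $\Delta d=-1$, the depth is $d$, and only nodes with $\Delta d\ge 1$ can branch. Your looser count is still a valid upper bound and happens to reproduce the theorem's stated $\bigo{k\,d^{d+1}}$ exactly, whereas the tighter count would give $\bigo{k\,d^{d}}$; since the theorem is a cited bound this is fine, but you should present the depth-$(d{+}1)$ perfect tree as an upper envelope on the recursion rather than as what the recursion actually traces out.
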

We now describe a quantum algorithm based on \texttt{TreeSearch} and the quantum search via random walk~\cite{magniez_search_2011}.
To use the quantum search, we first need to define a Markov chain $P$.
Let $\Omega$ be the set of nodes of $\mathcal{T}$.
We may denote some node of $\mathcal{T}$ by a $d$-dimensional, integer-valued vector $\mathbf{j} \in \{0,\ldots,d+1\}^{d}$.
The root is $(0,\ldots,0)$; for $\mathbf{j}$ in the $l^{\text{th}}$ layer, $1 \leq l \leq d$, the first $l$ coordinates are non-zero, corresponding to which branch was taken at layer $l$.
Then we can write:
\begin{equation}
    \Omega = \{ \mathbf{j} = (j_1,\ldots,j_l, 0,\ldots, 0) \in \{0,\ldots,d+1\}^{d} : 0 \leq l \leq d; 1 \leq j_i \leq d \}.
\end{equation}

We define a very symmetric chain $P$ on $\Omega$ as follows:
\begin{align}
\text{If $\mathbf{j}$ is neither root nor leaf,} \quad
    p_{\mathbf{jj'}} &= \begin{cases}
    \frac{1}{2} & \text{if } \mathbf{j'} = \mathbf{j}, \\
    \frac{1}{4(d+1)} & \text{if $\mathbf{j'}$ is a child of $\mathbf{j}$}, \\  
    \frac{1}{4} & \text{if $\mathbf{j'}$ is the parent of $\mathbf{j}$}, \\
    0 & \text{otherwise.}
\end{cases} \\
\text{If $\mathbf{j}$ is a leaf,} \quad
p_{\mathbf{jj'}} &= \begin{cases}
    \frac{1}{2} & \text{if } \mathbf{j'} = \mathbf{j}, \\
    \frac{1}{2} & \text{if $\mathbf{j'}$ is the parent of $\mathbf{j}$}, \\
    0 & \text{otherwise.}
\end{cases} \\
\text{If $\mathbf{j}$ is the root,} \quad
p_{\mathbf{jj'}} &= \begin{cases}
    \frac{1}{2} & \text{if } \mathbf{j'} = \mathbf{j}, \\
    \frac{1}{2(d+1)} & \text{if $\mathbf{j'}$ is a child of $\mathbf{j}$}, \\
    0 & \text{otherwise.}
\end{cases}
\end{align}
The set of edges of $P$ is
$E = \{(\mathbf{j}, \mathbf{j'}) \in \Omega \times \Omega: \mathbf{j} \text{ in layer }l,\ \mathbf{j'} \text{ in layer }l+1,\ j_i = j'_i, i = 1,\ldots, l\}$.

We show in~\Cref{app:markov} that the eigenvalue gap of $P$ is at least inverse-quadratic in $d$:
\begin{lemma}
    The eigenvalue gap $\delta(P)$ of $P$ is lower-bounded by $\delta(P) > {2^{-9} d^{-2}}$.
\end{lemma}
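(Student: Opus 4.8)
The plan is to lower-bound $\delta(P)=1-\lambda_2(P)$ (the laziness and reversibility of $P$ make this identification legitimate) via the canonical-paths / Poincaré inequality of Diaconis--Stroock. First I would verify reversibility of $P$ and pin down its stationary distribution $\pi$. Since every non-lazy transition moves along an edge of the tree $\mathcal{T}$, detailed balance determines $\pi$ layer by layer: solving the balance equation of a layer-$l$ node against its parent gives an (unnormalised) weight $\Theta\!\big((d+1)^{-l}\big)$, with $O(1)$ multiplicative corrections at the root and at the leaf layer, and normalising constant $Z=\Theta(d)$. The calculation only needs to output two quantities: for the edge $e_v$ joining a layer-$l$ node $v$ to its parent, the ergodic flow $Q(e_v)=\pi_v\,p_{v\to\mathrm{par}(v)}=\Theta\!\big((d\,(d+1)^{l})^{-1}\big)$ (the prefactor being $\tfrac14$ for internal $v$ and $\tfrac12$ for a leaf, which after weighting yields the \emph{same} value); and the $\pi$-mass of the subtree $T_v$ dangling below $v$, which equals $\tfrac{2(d-l)+1}{2d\,(d+1)^{l}}$ and hence is at most $(d+1)^{-l}$. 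Combining these gives the one structural fact that drives the whole argument: $\pi(T_v)/Q(e_v)\le 4d$, uniformly over all $v$.

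Next I would invoke the Poincaré bound: for a reversible, irreducible chain, $1-\lambda_2\ge 1/\rho$ with $\rho=\max_{e}\ \tfrac1{Q(e)}\sum_{x,y:\,\gamma_{xy}\ni e}|\gamma_{xy}|\,\pi_x\pi_y$, where $\gamma_{xy}$ is any fixed path from $x$ to $y$ through positive-probability transitions. Because the state graph of $P$ is the tree $\mathcal{T}$, the paths are forced to be the unique simple paths, so $\gamma_{xy}$ traverses $e_v$ exactly when $x$ and $y$ lie on opposite sides of $e_v$, i.e.\ exactly one of them lies in $T_v$. Bounding $|\gamma_{xy}|\le\operatorname{diam}(\mathcal{T})=2d$ and $\sum_{x\in T_v,\,y\notin T_v}\pi_x\pi_y=\pi(T_v)\big(1-\pi(T_v)\big)\le\pi(T_v)$, the congestion of $e_v$ is at most $\tfrac{2\cdot 2d\cdot\pi(T_v)}{Q(e_v)}\le 4d\cdot 4d=16d^2$. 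Taking the maximum over the finitely many tree edges gives $\rho< 16d^2$, so $\delta(P)\ge 1/\rho>\tfrac1{16d^2}>2^{-9}d^{-2}$, which is the claim with room to spare; the smallest cases (e.g.\ $d=1$, where $P$ has spectrum $\{1,\tfrac12,0\}$ and $\delta(P)=\tfrac12$) are checked directly.

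The only real work I anticipate is the bookkeeping in the first step: $P$ is not vertex-transitive, so the stationary weights and the flows $Q(e)$ behave slightly differently at the root, at layer $1$, at generic interior layers, and at the leaves, and one must confirm $\pi(T_v)/Q(e_v)=O(d)$ for each type of edge and check that the final constant sits comfortably below $2^{9}$ (it does, by a wide margin, so a crude accounting suffices). An alternative route is Cheeger's inequality $\delta(P)\ge\tfrac12\Phi^{2}$: the bottleneck is at the root, where cutting off one of the $d{+}1$ subtrees has conductance $\tfrac1{2(2d-1)}\ge\tfrac1{4d}$, giving $\delta(P)\ge\tfrac1{32d^2}$; but establishing that conductance lower bound over \emph{every} subset $S$ with $\pi(S)\le\tfrac12$ is more delicate than the canonical-paths argument, which simply exploits the fact that a tree pins down the routing between every pair of states.
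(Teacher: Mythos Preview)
Your proposal is correct and follows the same high-level strategy as the paper: verify reversibility, compute the layerwise stationary distribution $\pi_\mathbf{j}=\tilde\pi_l/(d+1)^l$, route every pair of states along the unique tree path, and bound the congestion on each edge using $\pi(T_v)$ versus the ergodic flow $Q(e_v)$. The combinatorics you outline (in particular $\pi(T_v)/Q(e_v)\le 4d$) match what the paper derives.

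The one genuine difference is the inequality you feed this into. The paper uses the Sinclair-type congestion \emph{without} path lengths, obtains $\rho<8d$, and then passes through Cheeger via $\delta(P)\ge \Phi^2/2\ge 1/(8\rho^2)$, landing exactly on $2^{-9}d^{-2}$. You instead invoke the Diaconis--Stroock Poincar\'e bound $\delta(P)\ge 1/\rho_{\mathrm{DS}}$ with $\rho_{\mathrm{DS}}$ incorporating the path lengths $|\gamma_{xy}|\le 2d$, giving $\rho_{\mathrm{DS}}\le 16d^2$ and hence $\delta(P)\ge 1/(16d^2)$. Your route avoids the quadratic loss of Cheeger and yields a constant that is a factor $32$ better than what the lemma asks for; the paper's route is slightly cruder but hits the stated constant on the nose. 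Either is perfectly acceptable here.
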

\begin{proof}
    See~\Cref{app:markov}.
\end{proof}

We now define the data structure and compute the costs of setup, update and checking required by the quantum search routine.

\textbf{Data Structure.}
By analysing~\Cref{code:treesearch}, we see that the data we need to track is the history of modifications made to the candidate string.
In particular, given a list of indices $i_1,\ldots,i_l \in \{1,\ldots,k\}$ representing the donor strings, and positions $k_1,\ldots,k_l \in \{1,\ldots,kd\}$ representing the character position that was substituted, we can build the candidate string $s$ from $s_1$ in time $\bigo{l} = \bigo{d}$.
This suggests a simple data structure for our algorithm.
For $\mathbf{j} = (j_1,\,\ldots,\,j_l,\,0,\,\ldots,\,0)$, we take $D(\mathbf{j}) = \{ (i_1,\,\ldots,\,i_l,\,0,\,\ldots,\,0),\ (k_1,\,\ldots,\,k_l,\,0,\,\ldots,\,0) \}$.
This structure requires $\bigo{d\log(kd)}$ qubits.

\textbf{Setup.}
First, we prepare the state 
$\ket{\Phi} = \sum_{\mathbf{j} \in \Omega} \sqrt{\pi_{\mathbf{j}}} \ket{\mathbf{j}}$,
which is a state-preparation operation on $\bigo{d\log(d)}$ qubits. By, e.g., the results of Zheng et al.~\cite{zhang_quantum_2022}, this can be done in time $\bigo{d\log(d)}$.
We then need to implement the transformation:
\begin{equation}
    U_D : \ket{\mathbf{j}} \ket{\mathbf{0}} \mapsto \ket{\mathbf{j}}\ket{D(\mathbf{j})}.
\end{equation}
In the worst case, this requires computing $d$ modifications to the candidate string $s$.
For each modification, we need to compare $s$ to the $k$ input strings, which takes time $\bigo{k\cdot kd}$.
So the worst case complexity of $U_D$ is $\bigotilde{k^2d^2}$.
Finally, we append the state $\ket{0}_D \equiv \ket{0}\ket{0}$ at no cost.
So $S = \bigotilde{k^2d^2}$.

\textbf{Update.}
We wish to implement the transformation:
\begin{equation}
    \ket{\mathbf{j}}_D \ket{\mathbf{0}}_D \mapsto \ket{\mathbf{j}}_D \sum_\mathbf{j'} \sqrt{p_{\mathbf{j}\mathbf{j'}}} \ket{\mathbf{j'}}_D.
\end{equation}
We first implement a diffusion-like operator that realises
\begin{equation}
\ket{\mathbf{j}}_D\ket{\mathbf{0}}_D \mapsto \ket{\mathbf{j}}_D \sum_{\mathbf{j'}}\sqrt{p_{\mathbf{j}\mathbf{j'}}} \ket{\mathbf{j'}}\ket{\mathbf{0}}.
\end{equation}
This controlled-state-preparation on $\bigo{d\log(d)}$ qubits can again be performed in time $\bigotilde{d\log(d)}$.
To update the data structure, we must compute $\ket{\mathbf{j'}}_D$ from $\ket{\mathbf{j}}_D$, where $\mathbf{j'}$ is either the parent or child of $\mathbf{j}$.
To go to the parent, we simply set the last entries $i_l, k_l$ to zero in $\bigo{1}$ depth.
To go to a child, we must compute one further modification to the candidate string, in time $\bigotilde{k^2d}$. So $U = \bigotilde{k^2d}$.

\textbf{Checking.}
It takes $\bigo{d}$ time to compute the candidate string and $\bigo{k^2d}$ time to check it against all the input strings. So $C = \bigotilde{k^2d}$.

\textbf{Overall time complexity.}
Finally, we need to bound the probability of sampling, under the stationary distribution, a marked element.
The worst case is when a single leaf is marked.
From the expression for $\tilde{\pi}$ derived in~\Cref{eq:stationary_distribution}, we may bound:
\begin{equation}
    \epsilon \coloneqq \mathbb{P}_{\mathbf{J}\sim \pi}(\mathbf{J} \in M) \geq \frac{\tilde{\pi}_d}{(d+1)^d} = \frac{1}{2d(d+1)^d}.
\end{equation}
Substituting the quantum costs as well as the computed values of $\delta$ and $\epsilon$ into the equation provided by~\Cref{thm:q_search} gives a runtime of order:
\begin{equation}
    \bigotilde{k^2d^2 + \sqrt{2d (d+1)^d} \left(\sqrt{2^9 d^2} k^2d + k^2d \right)}
    = \bigotilde{k^2 \cdot d^{(d+5)/2}}.
\end{equation}

We therefore obtain our second result,~\Cref{prop:csp_hamming}.
\csphamming*

Recall that the classical algorithm has runtime $\bigo{kn + k \cdot d^{d+1}}$.
We thus obtain a quadratic improvement in the dominant factor $d^d \rightarrow d^{d/2}$.

\subsection{A Quantum Dynamic Programming Algorithm for the CSP under Levenshtein Distance}
Nicolas and Rivals~\cite{nicolas_hardness_2005} present a Dynamic Programming algorithm for the CSP under any natural weighted edit distance.
For clarity, we restrict here to the Levenshtein distance.
It runs on a $2k$-dimensional Boolean array, where each dimension has length $n+1$.
This is exactly the setting considered by Glos et al.~\cite{glos_quantum_2021}.
The only change required is to replace the $\bigo{1}$-cost membership queries of Glos et al. with $\bigo{\sigma \cdot 2^k}$-cost logical ORs required by Nicolas and Rivals.
By applying~\Cref{thm:qdp}, we obtain a quantum algorithm with runtime $\bigotilde{\sigma \cdot 2^k \cdot T(n)^{2k}}$, and thus our third result,~\Cref{prop:csp_levenshtein}.
\csplevenshtein*

\section{Quantum Algorithms for the Closest Substring Problem}
\label{sec:q_cssp}
We also present a quantum algorithm for the CSSP under the Hamming distance, based on a simple quantisation of an $\XP$ algorithm due to Marx~\cite{marx_closest_2005}.
We briefly explain the algorithm, and then show how a direct application of Grover search can lead to a quadratic speedup.

The algorithm is based on searching through certain sets of strings called \emph{generators}.
A set $G = \{ g_1,\ldots, g_l\}$ of length-$L$ strings is a \emph{generator} of the length-$L$ string $s$ if, whenever all $g_i$ have the same character at some position $p$, then $s$ also has that character at position $p$.
The \emph{size} of $G$ is $l$ and the \emph{conflict} is the number of positions where the $g_i$ do \emph{not} all agree. 
Marx then proves that if an instance of the CSSP is solvable, then there exists a solution with a generator that is not too large and has low conflict.
We restate the result in~\Cref{lem:generator}.

\begin{lemma}[\cite{marx_closest_2005}] \label{lem:generator}
    If an instance of $CSSP$ is solvable, then there exists a solution $\hat{s}$ that has a generator $G = \{g_1,\ldots,g_l\}$ satisfying:
    \begin{itemize}
        \item All generator strings $g_i$ are length-$L$ substrings of an input string $s_j$.
        \item The size of $G$ is $l \leq \log(d) + 2 $.
        \item The conflict of $G$ is at most $d(\log d + 2)$.
    \end{itemize}
\end{lemma}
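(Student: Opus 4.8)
The plan is to build the generator greedily, starting from the solution $\hat{s}$ and a single well-chosen substring, and then repeatedly adding substrings that ``fix'' the largest remaining block of disagreement between the current generated string and $\hat{s}$. First I would fix notation: since the CSSP instance is solvable by $\hat{s}$, for each input $s_j$ there is a length-$L$ substring $w_j$ with $d_H(\hat{s}, w_j) \le d$. Take $g_1 = w_1$ (any fixed $w_j$ will do). The set $\{g_1\}$ generates $g_1$ itself; the positions where $\{g_1,\dots\}$ currently ``generates wrongly'' relative to $\hat{s}$ are exactly the positions $p$ where $g_1[p] \ne \hat{s}[p]$, of which there are at most $d$. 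The key structural idea is: at each stage, look at the set $B$ of positions where all current generators agree but disagree with $\hat{s}$; if $B = \emptyset$ we are done (the current set is a generator of $\hat{s}$); otherwise pick an input substring $w_j$ that agrees with $\hat{s}$ on at least half of $B$ — such a $w_j$ exists because ... that is the crux, addressed below — and add it as the next generator string.

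The core counting argument I would run is a potential/halving argument on $|B|$. Suppose at some stage the ``bad'' set is $B$ with $|B| = b$. I claim there is an input string $s_j$ with a length-$L$ substring $w_j$, $d_H(\hat s, w_j)\le d$, that agrees with $\hat s$ on at least $b/2$ of the positions in $B$. Given such a $w_j$, adding it to the generator set kills at least half of $B$: any position in $B$ where $w_j$ agrees with $\hat{s}$ is no longer a position where all generators agree-but-wrong (the generators now disagree there, so $s$ is unconstrained there and can be set to $\hat s$). Hence $|B|$ at least halves at every step, so after at most $\lceil \log_2 d\rceil$ additions we reach $|B| \le 1$, and one more carefully chosen string (or a direct argument) clears it; counting $g_1$ as well, the total size is $l \le \log(d) + 2$. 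For the conflict bound: each generator string $g_i$ satisfies $d_H(\hat s, g_i)\le d$, so $g_i$ disagrees with $\hat s$ on at most $d$ positions; every position of conflict of $G$ (a position where the $g_i$ disagree among themselves) is a position where at least one $g_i$ disagrees with $\hat s$, hence the conflict is at most $\sum_i d_H(\hat s, g_i) \le l\cdot d \le d(\log d + 2)$.

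The main obstacle — the step I expect to need the most care — is justifying the ``agrees on at least half of $B$'' claim, i.e. showing the greedy halving step is always available. The natural route is a counting/averaging argument: consider the positions in $B$; since $\hat{s}$ is a solution, for a substring $w_j$ realising $d_H(\hat s, w_j)\le d$ we know $w_j$ disagrees with $\hat s$ in at most $d$ places total, and one wants to argue that no single substring can be ``bad'' on more than half of $B$ while still leaving all of $B$ as a common-agreement set for the previously chosen generators — here is where Marx's original argument leverages the fact that the positions in $B$ are precisely those where the \emph{current} generators all agree with each other (and with one particular value $\ne \hat s[p]$), so the new string either matches that common value (leaving the position bad) or breaks it. The averaging is then over the choice of which input to take: because at least one of the $w_j$ must be ``close'' to $\hat s$ on $B$ — otherwise summing disagreements on $B$ over a suitable collection would contradict $d_H(\hat s, w_j)\le d$ — a halving move exists. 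I would reconstruct this averaging bound from \cite{marx_closest_2005} carefully, handle the boundary cases $|B|\in\{0,1\}$ separately (the $+2$ in the size bound absorbs the initial $g_1$ and the final clean-up), and then assemble the three bullet points; the size and conflict bounds follow immediately once the halving step is established.
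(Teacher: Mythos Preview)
The paper itself does not prove this lemma; it restates it from \cite{marx_closest_2005} and uses it as a black box, so there is no in-paper proof to compare against beyond Marx's original.

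Your greedy-halving skeleton matches Marx's argument, and your conflict bound is correct. The gap is exactly where you flagged it: the averaging justification for the halving step is wrong. It is \emph{not} true that some witness substring $w_j$ must agree with $\hat{s}$ on at least half of $B$. Take $k=2$, $s_1=s_2=00$, $d=1$, $\hat{s}=01$: then $g_1=00$, $B=\{2\}$, and every available $w_j=00$ disagrees with $\hat{s}$ on all of $B$, so no halving move exists. What saves this instance is that the currently generated string $s'=00$ is already a solution, so one should stop and output $s'$ (not $\hat{s}$) with generator $\{g_1\}$. Your termination condition ``$B=\emptyset$'' is too strong: the lemma only promises that \emph{some} solution has a small generator, not that your initially fixed $\hat{s}$ does.

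The correct mechanism is as follows. Define $s'$ to equal the common generator value where the $g_i$ all agree and $\hat{s}$ on conflict positions, so $s'$ and $\hat{s}$ differ exactly on $B$. If $s'$ is a solution, stop; $G$ generates $s'$. Otherwise some input $s_j$ has every length-$L$ substring at Hamming distance $>d$ from $s'$; in particular $d_H(s',w_j) > d \ge d_H(\hat{s},w_j)$ for the witness $w_j$. Writing $a,b,c$ for the number of positions of $B$ where $w_j$ equals $\hat{s}$, equals $s'$, or neither, this gives $d_H(s',w_j)-d_H(\hat{s},w_j)=a-b>0$, hence $2b<a+b+c=|B|$. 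Adding $w_j$ to $G$ leaves bad exactly the positions where $w_j=s'$, so the new bad set has size $b<|B|/2$. Averaging over inputs plays no role; the leverage comes entirely from $s'$ failing to be a solution. With this fix your size and conflict bounds go through as written.
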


The algorithm then proceeds as follows.
First, create a set $T$ of all length-$L$ substrings of $s_1,\,\ldots,\,s_k$.
For every subset $G \subseteq T$ with $\abs{G} = \log(d) + 2$, $G$ is a candidate generator for~\Cref{lem:generator}.
There are at most $\binom{\abs{T}}{\log(d) + 2}$ such subsets.
Since $\abs{T} \leq k \cdot (n - L) \leq kn$ there are at most $\bigo{(kn)^{\log(d) + 2}}$ possible generators.
For each $G$ with conflict at most $d(\log(d) + 2)$, each string $s$ generated by $G$ is a candidate solution.
There are at most $\sigma^{d(\log(d) + 2)}$ such strings.
Checking a candidate string $s$ requires calculating the Hamming distance between $s$ and each length-$L$ substring of each input string $s_i$.
This takes time $\mathcal{O}(k \cdot (n - L) \cdot L) = \mathcal{O}(kn^2)$.
The overall time is thus $\bigo{\sigma^{d(\log(d) + 2)} )(kn)^{\log(d)}
k^3 n^4}$.

We note that the algorithm is simply an exhaustive search over subsets $G \subseteq T$ and then over strings $s$ generated by $G$.
Applying a nested Grover search subroutine to both of these searches leads to a quantum algorithm with runtime 
$\mathcal{\tilde{O}}\Bigl(\sqrt{(kn)^{\log(d) + 2}} $ $\cdot \sqrt{\sigma^{d(\log(d) + 2)}} $ $ \cdot kn^2\Bigr)$. 
We therefore obtain~\Cref{prop:cssp_hamming}.
\cssphamming*

\section{Discussion and Open Questions}
\label{sec:discussion}
We now have 3 quantum algorithms for the CSP. We summarise their properties in~\Cref{tab:q_csp}.
\begin{table}[t]
\begin{tabular}{@{}lrrr@{}}
\toprule
Algorithm & Distance metric & Runtime & Optimality conditions \\ \midrule
Trivial Search & Any weighted edit & $\sigma^{n/2} \cdot \sqrt{k} \cdot D(n)$ & $\sigma$ small, $d$ and $k$ large \\
Random Walk & Hamming & $kn + k^2 d^{(d+5)/2}$ & $d$ small \\
Dynamic Programming & Levenshtein & $\sigma \cdot 2^k \cdot T(n)^{2k}$ & $k$ small \\ \bottomrule
\end{tabular}
\caption{Table summarising the applicability and runtime of the 3 quantum algorithms presented for the CSP. $D(n)$ is the classical time to compute the desired distance metric. $n/e \leq T(n) < n$.}
\label{tab:q_csp}
\end{table}

For concreteness, consider a bioinformatics application where we wish to find a consensus sequence for a number of individuals from the same species. 
In this case, we would expect $d \sim \beta n$ for some constant $\beta \in [0.001,0.1]$, depending on the species.
For example, for humans, there is roughly $0.1\%$ variance between individuals~\cite{auton_global_2015}.
We also take $\sigma = 4$.
We then consider the runtimes as $k$ varies.
For $k = \bigo{1}$, there is a clear advantage for the dynamic programming algorithm, which is polynomial-time, whereas the other algorithms are at least exponential.
Conversely, for $k = \bigo{\text{exp}(n)}$, the random walk algorithm remains only slightly super-exponential, with $\bigo{2^{\beta n\log(n)}}$ scaling, while dynamic programming is now doubly-exponential.
We summarise the runtimes across the range of $k$ in~\Cref{tab:q_csp_runtime}.
The optimal algorithm will depend on the exact set of parameters for the problem instance under consideration.

\begin{table}[t]
\begin{tabular}{@{}lrlrl@{}}
\toprule
\multirow{2}{*}{Algorithm} & \multicolumn{4}{c}{$k$} \\
 & \multicolumn{2}{c}{$\bigo{1}$} & \multicolumn{2}{c}{$\bigo{\log(n)}$} \\ \midrule
Trivial Search & $2^n$ & exp. & $2^n$ & exp. \\
Random Walk & $2^{\beta n\log(n)/2}$ & \begin{tabular}[c]{@{}l@{}}slightly\\ super-exp.\end{tabular} & $2^{\beta n\log(n)/2}$ & \begin{tabular}[c]{@{}l@{}}slightly\\ super-exp.\end{tabular} \\
\begin{tabular}[c]{@{}l@{}}Dynamic \\ Programming\end{tabular} & $T(n)^{\bigo{1}}$ & poly. & $T(n)^{ \bigo{\log(n)} }$ & super-poly. \\ \bottomrule
\end{tabular}
\\[12pt]
\begin{tabular}{@{}lrlrl@{}}
\toprule
\multirow{2}{*}{Algorithm} & \multicolumn{4}{c}{$k$} \\
 & \multicolumn{2}{c}{$\bigo{n^p}$} & \multicolumn{2}{c}{$\bigo{2^{qn}}$} \\ \midrule
Trivial Search & $2^n$ & exp. & $2^{n + qn/2}$ & exp. \\
Random Walk  & $2^{\beta n\log(n)/2}$ & \begin{tabular}[c]{@{}l@{}}slightly\\ super-exp.\end{tabular} & $2^{\beta n\log(n)/2}$ & \begin{tabular}[c]{@{}l@{}}slightly\\ super-exp.\end{tabular} \\
\begin{tabular}[c]{@{}l@{}}Dynamic \\ Programming\end{tabular}  & $T(n)^{2 n^p}$ & super-exp. & $ T(n)^{ 2^{qn+1} }$ & doubly exp. \\ \bottomrule
\end{tabular}
\caption{Tables summarising the runtime of the 3 quantum algorithms presented for the CSP in a bioinformatics context where $d = \beta n$ for a constant $\beta \in [0.001, 0.1]$ and $\sigma = 4$, split into poly-scaling and exponential-scaling regimes.. Only the dominant factors are shown -- log factors are hidden for poly-scaling algorithms, and poly factors are hidden for super-poly-scaling algorithms. We use slightly super-exponential to describe an ${2^{\bigo{n\log(n)}}}$ scaling, following~\cite{lokshtanov_slightly_2018}. $n/e \leq T(n) < n$.}
\label{tab:q_csp_runtime}
\end{table}

We also have an algorithm for the CSSP. It is best suited to bounded-alphabet settings where the threshold distance is not too large.
For example, the bioinformatics setting of $d \ll n$ and $\sigma = 4$ results in a tractable runtime up to moderately large $k$ and $n$.

Similarly to the classical case, our lower bound for the CSP holds only for binary alphabets. 
Extending the result to general alphabets would most likely require finding a matching classical reduction in this setting, where the equivalence to the Remotest String Problem no longer holds.

A further interesting open question is to find lower bounds for the various parameterized settings. 
For example, is the near-quadratic improvement optimal in the bounded-$d$ setting?
Finding such bounds would likely require using the full QSETH framework provided by Buhrman et al.~\cite{buhrman_framework_2021}.
It has been shown classically that reductions from branching programs (equivalently, $\NC$ circuits) can lead to stronger fine-grained complexity results~\cite{abboud_simulating_2015}.
For example, a subquadratic edit distance algorithm would imply exponential improvements over exhaustive search for certain complex SAT problems. 
Similarly, a reduction leads to an $\Omega(n^{1.5})$ quantum lower bound for computing the edit distance, conditioned on $\NC$-QSETH, whereas the basic QSETH gives only a $\Omega(n)$ lower bound.
Finding the right class of circuits on which to condition QSETH to find parameterized lower bounds remains an intriguing problem.

\section*{Acknowledgements}
JC and SS thank the entire QPG team for their useful support and discussions throughout the process of this work, and especially Hitham Hassan for feedback on draft versions. Work on the Quantum Pangenomics project was supported by Wellcome Leap as part of the Q4Bio Program. SS was also supported by the Royal Society University Research Fellowship.

\bibliographystyle{ieeetr}
\bibliography{references.bib}

\begin{thebibliography}{10}

\bibitem{sayers_genbank_2023}
E.~W. Sayers, M.~Cavanaugh, K.~Clark, K.~D. Pruitt, S.~Sherry, L.~Yankie, and I.~Karsch-Mizrachi, ``{GenBank} 2023 update,'' {\em Nucleic Acids Research}, vol.~51, pp.~D141--D144, Jan. 2023.

\bibitem{wetterstrand_dna_nodate}
K.~A. Wetterstrand, ``{DNA} {Sequencing} {Costs}: {Data} from the {NHGRI} {Genome} {Sequencing} {Program} ({GSP}).''

\bibitem{kevin_lanctot_distinguishing_2003}
J.~Kevin~Lanctot, M.~Li, B.~Ma, S.~Wang, and L.~Zhang, ``Distinguishing string selection problems,'' {\em Information and Computation}, vol.~185, pp.~41--55, Aug. 2003.

\bibitem{levenshtein_binary_1965}
V.~I. Levenshtein, ``Binary codes capable of correcting deletions, insertions, and reversals,'' {\em Dokl. Akad. Nauk SSSR}, vol.~163, no.~4, pp.~845--848, 1965.

\bibitem{day_critical_1992}
W.~H. Day and F.~McMorris, ``Critical comparison of consensus methods for molecular sequences,'' {\em Nucleic Acids Research}, vol.~20, no.~5, pp.~1093--1099, 1992.

\bibitem{zou_novel_2012}
Q.~Zou, X.~Shan, and Y.~Jiang, ``A {Novel} {Center} {Star} {Multiple} {Sequence} {Alignment} {Algorithm} {Based} on {Affine} {Gap} {Penalty} and {K}-{Band},'' {\em Physics Procedia}, vol.~33, pp.~322--327, Jan. 2012.

\bibitem{gramm_fixed-parameter_2003}
{Gramm}, {Niedermeier}, and {Rossmanith}, ``Fixed-{Parameter} {Algorithms} for {CLOSEST} {STRING} and {Related} {Problems},'' {\em Algorithmica}, vol.~37, pp.~25--42, Sept. 2003.

\bibitem{nicolas_hardness_2005}
F.~Nicolas and E.~Rivals, ``Hardness results for the center and median string problems under the weighted and unweighted edit distances,'' {\em Journal of Discrete Algorithms}, vol.~3, pp.~390--415, June 2005.

\bibitem{lokshtanov_slightly_2018}
D.~Lokshtanov, D.~Marx, and S.~Saurabh, ``Slightly {Superexponential} {Parameterized} {Problems},'' {\em SIAM Journal on Computing}, vol.~47, pp.~675--702, Jan. 2018.
\newblock arXiv:1902.08723 [cs].

\bibitem{maji_listing_2015}
H.~Maji and T.~Izumi, ``Listing {Center} {Strings} {Under} the {Edit} {Distance} {Metric},'' in {\em Combinatorial {Optimization} and {Applications}} (Z.~Lu, D.~Kim, W.~Wu, W.~Li, and D.-Z. Du, eds.), (Cham), pp.~771--782, Springer International Publishing, 2015.

\bibitem{cheng_efficient_2004}
C.~H. Cheng, C.~C. Huang, S.~Y. Hu, and K.-M. Chao, ``Efficient {Algorithms} for {Some} {Variants} of the {Farthest} {String} {Problem},'' in {\em Proceedings of {Workshop} on {Combinatorial} {Math} and {Computation} {Theory}}, pp.~266--272, 2004.

\bibitem{wang_efficient_2009}
L.~Wang and B.~Zhu, ``Efficient {Algorithms} for the {Closest} {String} and {Distinguishing} {String} {Selection} {Problems},'' in {\em Frontiers in {Algorithmics}} (X.~Deng, J.~E. Hopcroft, and J.~Xue, eds.), Lecture {Notes} in {Computer} {Science}, (Berlin, Heidelberg), pp.~261--270, Springer, 2009.

\bibitem{hollenberg_fast_2000}
L.~C.~L. Hollenberg, ``Fast {Quantum} {Search} {Algorithms} in {Protein} {Sequence} {Comparison} - {Quantum} {Biocomputing},'' {\em Physical Review E}, vol.~62, pp.~7532--7535, Nov. 2000.
\newblock arXiv:quant-ph/0002076.

\bibitem{ramesh_string_2003}
H.~Ramesh and V.~Vinay, ``String matching in {O}(n+m) quantum time,'' {\em Journal of Discrete Algorithms}, vol.~1, pp.~103--110, Feb. 2003.

\bibitem{kociumaka_communication_2024}
T.~Kociumaka, J.~Nogler, and P.~Wellnitz, ``On the {Communication} {Complexity} of {Approximate} {Pattern} {Matching},'' Mar. 2024.
\newblock arXiv:2403.18812 [quant-ph].

\bibitem{prousalis_novel_2024}
K.~Prousalis, D.~Ntalaperas, K.~Georgiou, A.~Kalogeropoulos, T.~G. Stavropoulos, T.~Karamanidou, L.~Aggelis, N.~Konofaos, and C.~Papalitsas, ``­­­{A} novel quantum algorithm for {Biological} {Sequence} {Alignment} using {Quantum} {Accelerated} {Mapping} in {Seed}-and-{Extend} {Technique}.,'' May 2024.

\bibitem{fang_divide-and-conquer_2024}
J.-K. Fang, Y.-F. Lin, J.-H. Huang, Y.~Chen, G.-M. Fan, Y.~Sun, G.~Feng, C.~Guo, T.~Meng, Y.~Zhang, X.~Xu, J.~Xiang, and Y.~Li, ``Divide-and-{Conquer} {Quantum} {Algorithm} for {Hybrid} d e n o v o {Genome} {Assembly} of {Short} and {Long} {Reads},'' {\em PRX Life}, vol.~2, p.~023006, Apr. 2024.

\bibitem{montanaro_quantum_2015}
A.~Montanaro, ``Quantum pattern matching fast on average,'' Aug. 2015.
\newblock arXiv:1408.1816 [quant-ph].

\bibitem{aaronson_quantum_2019}
S.~Aaronson, D.~Grier, and L.~Schaeffer, ``A {Quantum} {Query} {Complexity} {Trichotomy} for {Regular} {Languages},'' Apr. 2019.
\newblock arXiv:1812.04219 [quant-ph].

\bibitem{niroula_quantum_2021}
P.~Niroula and Y.~Nam, ``A quantum algorithm for string matching,'' {\em npj Quantum Information}, vol.~7, pp.~1--5, Feb. 2021.
\newblock Number: 1 Publisher: Nature Publishing Group.

\bibitem{sarkar_qibam_2021}
A.~Sarkar, Z.~Al-Ars, C.~G. Almudever, and K.~L.~M. Bertels, ``{QiBAM}: {Approximate} {Sub}-{String} {Index} {Search} on {Quantum} {Accelerators} {Applied} to {DNA} {Read} {Alignment},'' {\em Electronics}, vol.~10, p.~2433, Oct. 2021.

\bibitem{childs_quantum_2022}
A.~M. Childs, R.~Kothari, M.~Kovacs-Deak, A.~Sundaram, and D.~Wang, ``Quantum divide and conquer,'' Oct. 2022.
\newblock arXiv:2210.06419 [quant-ph].

\bibitem{darbari_quantum_2022}
P.~Darbari, D.~Gibney, and S.~V. Thankachan, ``Quantum {Time} {Complexity} and {Algorithms} for {Pattern} {Matching} on {Labeled} {Graphs},'' in {\em String {Processing} and {Information} {Retrieval}: 29th {International} {Symposium}, {SPIRE} 2022, {Concepción}, {Chile}, {November} 8–10, 2022, {Proceedings}}, (Berlin, Heidelberg), pp.~303--314, Springer-Verlag, Nov. 2022.

\bibitem{jin_quantum_2022}
C.~Jin and J.~Nogler, ``Quantum {Speed}-ups for {String} {Synchronizing} {Sets}, {Longest} {Common} {Substring}, and k-mismatch {Matching},'' Nov. 2022.
\newblock arXiv:2211.15945 [cs].

\bibitem{akmal_near-optimal_2023}
S.~Akmal and C.~Jin, ``Near-{Optimal} {Quantum} {Algorithms} for {String} {Problems},'' {\em Algorithmica}, vol.~85, pp.~2260--2317, Aug. 2023.

\bibitem{gall_quantum_2023}
F.~L. Gall and S.~Seddighin, ``Quantum {Meets} {Fine}-grained {Complexity}: {Sublinear} {Time} {Quantum} {Algorithms} for {String} {Problems},'' {\em Algorithmica}, vol.~85, pp.~1251--1286, May 2023.
\newblock arXiv:2010.12122 [quant-ph].

\bibitem{gibney_near-optimal_2023}
D.~Gibney, C.~Jin, T.~Kociumaka, and S.~V. Thankachan, ``Near-{Optimal} {Quantum} {Algorithms} for {Bounded} {Edit} {Distance} and {Lempel}-{Ziv} {Factorization},'' Nov. 2023.
\newblock arXiv:2311.01793 [quant-ph].

\bibitem{ma_more_2010}
B.~Ma and X.~Sun, ``More {Efficient} {Algorithms} for {Closest} {String} and {Substring} {Problems},'' {\em SIAM Journal on Computing}, vol.~39, no.~4, pp.~1432--1443, 2010.
\newblock \_eprint: https://doi.org/10.1137/080739069.

\bibitem{chen_fast_2011}
Z.-Z. Chen and L.~Wang, ``Fast {Exact} {Algorithms} for the {Closest} {String} and {Substring} {Problems} with {Application} to the {Planted} ({L},d)-{Motif} {Model},'' {\em IEEE/ACM Transactions on Computational Biology and Bioinformatics}, vol.~8, pp.~1400--1410, Sept. 2011.
\newblock Conference Name: IEEE/ACM Transactions on Computational Biology and Bioinformatics.

\bibitem{buhrman_framework_2021}
H.~Buhrman, S.~Patro, and F.~Speelman, ``A {Framework} of {Quantum} {Strong} {Exponential}-{Time} {Hypotheses},'' {\em LIPIcs, Volume 187, STACS 2021}, vol.~187, pp.~19:1--19:19, 2021.
\newblock Artwork Size: 19 pages, 844487 bytes ISBN: 9783959771801 Medium: application/pdf Publisher: Schloss Dagstuhl – Leibniz-Zentrum für Informatik.

\bibitem{marx_closest_2005}
D.~Marx, ``The {Closest} {Substring} problem with small distances,'' in {\em 46th {Annual} {IEEE} {Symposium} on {Foundations} of {Computer} {Science} ({FOCS}'05)}, (Pittsburgh, PA, USA), pp.~63--72, IEEE, 2005.

\bibitem{grover_fast_1996}
L.~K. Grover, ``A fast quantum mechanical algorithm for database search,'' Nov. 1996.
\newblock arXiv:quant-ph/9605043.

\bibitem{magniez_search_2011}
F.~Magniez, A.~Nayak, J.~Roland, and M.~Santha, ``Search via {Quantum} {Walk},'' {\em SIAM Journal on Computing}, vol.~40, pp.~142--164, Jan. 2011.
\newblock arXiv:quant-ph/0608026.

\bibitem{ambainis_quantum_2018}
A.~Ambainis, K.~Balodis, J.~Iraids, M.~Kokainis, K.~Prūsis, and J.~Vihrovs, ``Quantum {Speedups} for {Exponential}-{Time} {Dynamic} {Programming} {Algorithms},'' July 2018.
\newblock arXiv:1807.05209 [quant-ph].

\bibitem{amir_configurations_2012}
A.~Amir, H.~Paryenty, and L.~Roditty, ``Configurations and {Minority} in the {String} {Consensus} {Problem},'' in {\em String {Processing} and {Information} {Retrieval}} (L.~Calderón-Benavides, C.~González-Caro, E.~Chávez, and N.~Ziviani, eds.), (Berlin, Heidelberg), pp.~42--53, Springer, 2012.

\bibitem{pevzner_computational_2000}
P.~A. Pevzner, {\em Computational {Molecular} {Biology}: {An} {Algorithmic} {Approach}}.
\newblock The MIT Press, Aug. 2000.

\bibitem{frances_covering_1997}
M.~Frances and A.~Litman, ``On covering problems of codes,'' {\em Theory of Computing Systems}, vol.~30, pp.~113--119, Apr. 1997.

\bibitem{ambainis_quantum_2004}
A.~Ambainis, ``Quantum query algorithms and lower bounds,'' in {\em Classical and {New} {Paradigms} of {Computation} and their {Complexity} {Hierarchies}} (B.~Löwe, B.~Piwinger, and T.~Räsch, eds.), (Dordrecht), pp.~15--32, Springer Netherlands, 2004.

\bibitem{buhrman_complexity_2002}
H.~Buhrman and R.~de~Wolf, ``Complexity measures and decision tree complexity: a survey,'' {\em Theoretical Computer Science}, vol.~288, pp.~21--43, Oct. 2002.

\bibitem{durr_quantum_1999}
C.~Durr and P.~Hoyer, ``A {Quantum} {Algorithm} for {Finding} the {Minimum},'' Jan. 1999.
\newblock arXiv:quant-ph/9607014.

\bibitem{ambainis_quantum_2014}
A.~Ambainis, ``Quantum walk algorithm for element distinctness,'' Apr. 2014.
\newblock arXiv:quant-ph/0311001.

\bibitem{szegedy_quantum_2004}
M.~Szegedy, ``Quantum {Speed}-{Up} of {Markov} {Chain} {Based} {Algorithms},'' in {\em 45th {Annual} {IEEE} {Symposium} on {Foundations} of {Computer} {Science}}, (Rome, Italy), pp.~32--41, IEEE, 2004.

\bibitem{glos_quantum_2021}
A.~Glos, M.~Kokainis, R.~Mori, and J.~Vihrovs, ``Quantum speedups for dynamic programming on \$n\$-dimensional lattice graphs,'' May 2021.
\newblock arXiv:2104.14384 [quant-ph].

\bibitem{bennett_logical_1973}
C.~H. Bennett, ``Logical {Reversibility} of {Computation},'' {\em IBM Journal of Research and Development}, vol.~17, pp.~525--532, Nov. 1973.

\bibitem{zuliani_logical_2001}
P.~Zuliani, ``Logical reversibility,'' {\em IBM Journal of Research and Development}, vol.~45, pp.~807--818, Nov. 2001.

\bibitem{abboud_can_2023}
A.~Abboud, N.~Fischer, E.~Goldenberg, K.~C. S, and R.~Safier, ``Can {You} {Solve} {Closest} {String} {Faster} than {Exhaustive} {Search}?,'' May 2023.
\newblock arXiv:2305.16878 [cs].

\bibitem{zhang_quantum_2022}
X.-M. Zhang, T.~Li, and X.~Yuan, ``Quantum {State} {Preparation} with {Optimal} {Circuit} {Depth}: {Implementations} and {Applications},'' {\em Physical Review Letters}, vol.~129, p.~230504, Nov. 2022.
\newblock arXiv:2201.11495 [quant-ph].

\bibitem{auton_global_2015}
A.~Auton {\em et~al.}, ``A global reference for human genetic variation,'' {\em Nature}, vol.~526, pp.~68--74, Oct. 2015.

\bibitem{abboud_simulating_2015}
A.~Abboud, T.~D. Hansen, V.~V. Williams, and R.~Williams, ``Simulating {Branching} {Programs} with {Edit} {Distance} and {Friends} or: {A} {Polylog} {Shaved} is a {Lower} {Bound} {Made},'' Nov. 2015.
\newblock arXiv:1511.06022 [cs].

\bibitem{mcnew_eigenvalue_2011}
N.~McNew, ``The {Eigenvalue} {Gap} and {Mixing} {Time},'' 2011.

\bibitem{kelly_reversibility_2011}
F.~P. Kelly, {\em Reversibility and {Stochastic} {Networks}}.
\newblock USA: Cambridge University Press, July 2011.

\bibitem{levin_markov_2006}
D.~A. Levin and Y.~Peres, {\em Markov {Chains} and {Mixing} {Times}, second edition}.
\newblock American Mathematical Society, 2006.

\bibitem{alon_1_1985}
N.~Alon and V.~D. Milman, ``λ1, {Isoperimetric} inequalities for graphs, and superconcentrators,'' {\em Journal of Combinatorial Theory, Series B}, vol.~38, no.~1, pp.~73--88, 1985.

\bibitem{boyer_tight_1998}
M.~Boyer, G.~Brassard, P.~Hoeyer, and A.~Tapp, ``Tight bounds on quantum searching,'' {\em Fortschritte der Physik}, vol.~46, pp.~493--505, June 1998.
\newblock arXiv:quant-ph/9605034.

\end{thebibliography}

\appendix
\section{Quantum Search via Random Walk}
\label{app:quantum_search}
In this Appendix, we give a full background on the quantum search via random walk method due to Magniez et al.~\cite{magniez_search_2011}, including a brief introduction to the theory of Markov chains.
For a full treatment, see, e.g.,~\cite{mcnew_eigenvalue_2011}.

\subsection{Markov Chains, Eigenvalue Gaps and Congestion}

A Markov chain is a stochastic process in which the probability of each event depends only on the state attained after the previous event.
The chain has a state space $\Omega$ consisting of all the states the chain can be in.
We consider discrete time chains, which can be described by a sequence of random variables $X_0, X_1, X_2,\ldots$ taking values in $\Omega$.

We identify a chain with its transition matrix $P = (p_{xy})_{x,y \in \Omega}$, where $p_{xy}$ is the probability of a transition from $x$ to $y$.
We must have $p_{xy} \geq 0$ and $\sum_{y \in \Omega} p_{xy} = 1$ to have the correct probabilistic interpretation.
The eigenvalues of $P$ therefore have magnitude at most 1.
It is often useful to associate a Markov chain with a random walk on an edge-weighted, directed graph $G = (\Omega, E)$ with weight on edge $x \rightarrow y$ given by $p_{xy}$.

The \emph{mixing time} of a Markov chain is the minimum time $t_{\text{mix}}$ such that the time-$t_{\text{mix}}$ distribution is ``close'' to the stationary distribution $\pi$.
The mixing time plays a central role in the time complexity of searching by random walks.
For example, consider~\Cref{code:c_search} with $t_1 \approx t_{\text{mix}}$.
Intuitively, leaving the chain for $t_{\text{mix}}$ time steps results in a well-mixed distribution, close to $\pi$.
Repeating for $t_2$ steps represents repeated sampling from $\pi$.
If we have a good bound on the probability under $\pi$ of sampling a marked element, we can use a Chernoff bound to argue that~\Cref{code:c_search} succeeds with high probability.
In order to bound the mixing time of chains, we will need several definitions.

A chain is said to be \emph{irreducible} if every state is reachable from every other state.
A state $x$ has \emph{period} given by $\gcd\{ n > 0 : \mathbb{P}(X_n = x \ | \ X_0 = x) > 0 \}$. 
A state is \emph{aperiodic} if it has period $1$, and an irreducible chain is \emph{aperiodic} if any state is.
An irreducible, aperiodic chain is \emph{ergodic}.
A state $x$ is \emph{positive recurrent} if, starting from $x$ at time $0$, the chain will return to $x$ in finite time with probability $1$.
An irreducible chain is \emph{positive recurrent} if any state is.

A \emph{stationary distribution} $\pi = (\pi_x)$ of a Markov chain $P$ is a left eigenvector of $P$ with eigenvalue $1$ and with entries satisfying $\pi_x \geq 0$ and $\sum_x \pi_x = 1$.
Ergodic chains always have a unique stationary distribution.
Moreover, all other eigenvalues have magnitude strictly less than 1.

For an ergodic chain $P$, the time-reversed Markov chain $P^* = (p^*_{xy})$ of $P$ is defined by the equations $\pi_x p_{xy} = \pi_y p^*_{yx}$. 
$P$ is said to be \emph{reversible} if $P^* = P$.
A reversible chain has real eigenvalues.
We can check reversibility by using~\Cref{thm:kolmogorov}.
\begin{theorem}[Kolmogorov's Criterion~\cite{kelly_reversibility_2011}]\label{thm:kolmogorov}
    An ergodic, positive recurrent Markov Chain $P$ with finite state space $\Omega$ is reversible if and only if $P$ satisfies
    \begin{equation}\label{eq:kolmogorov}
        p_{x_1 x_2} p_{x_2 x_3} \ldots p_{x_{n-1}x_{n}} p_{x_n x_1} = 
    p_{x_1 x_n} p_{x_n x_{n-1}} \ldots p_{x_{3}x_{2}} p_{x_2 x_1}
    \end{equation}
    for all finite sequences of states $x_1,\,\ldots,\,x_n \in \Omega$.\\
    In words, \Cref{eq:kolmogorov} states that for any closed loop path, the probability of traversing the loop in the forwards and backwards directions must be equal.
\end{theorem}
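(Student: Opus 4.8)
The plan is to prove both directions via the detailed-balance characterisation of reversibility: an ergodic chain on a finite state space is reversible precisely when its unique stationary distribution $\pi$ satisfies $\pi_x p_{xy} = \pi_y p_{yx}$ for all $x, y$. I will use throughout that an ergodic chain on finite $\Omega$ has a unique stationary distribution, with $\pi_x > 0$ for every $x$.

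For the forward implication, assume $P$ is reversible and fix any sequence $x_1,\dots,x_n$, with indices read cyclically so that $x_{n+1} = x_1$. Rewriting each factor of the forward product using detailed balance as $p_{x_i x_{i+1}} = (\pi_{x_{i+1}}/\pi_{x_i})\, p_{x_{i+1} x_i}$ and multiplying over $i$, the ratios $\pi_{x_{i+1}}/\pi_{x_i}$ telescope around the cycle to $1$, and what remains is exactly the backward product in Eq.~\eqref{eq:kolmogorov}. This is essentially a one-line computation.

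The substantive direction is the converse, where I build $\pi$ by hand from the cycle condition. First I would record that Eq.~\eqref{eq:kolmogorov} forces $p_{xy} > 0 \implies p_{yx} > 0$: by irreducibility there is a path from $y$ back to $x$ avoiding the edge $x \to y$, and closing it through that edge gives a closed walk whose forward product is positive while its backward product carries the factor $p_{yx}$, so $p_{yx} = 0$ would contradict the criterion. Now fix a reference state $x_0$; for each $y$ choose a positive-probability path $x_0 = y_0 \to \dots \to y_m = y$ and set $\tilde\pi_y = \prod_{i=0}^{m-1} p_{y_i y_{i+1}} / p_{y_{i+1} y_i}$, all denominators being nonzero by the preceding observation. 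The crucial claim is that $\tilde\pi_y$ is independent of the chosen path: given two such paths, concatenating one with the reversal of the other produces a closed walk to which Eq.~\eqref{eq:kolmogorov} applies directly (the hypothesis allows arbitrary, possibly repeating, sequences), and rearranging the equality of its forward and backward products shows the two path-products coincide. Once well-definedness is established, extending a path to $y$ by one more step gives $\tilde\pi_y p_{yz} = \tilde\pi_z p_{zy}$ whenever $p_{yz} > 0$, and both sides vanish when $p_{yz} = 0$; thus $\tilde\pi$ satisfies detailed balance. Since $\Omega$ is finite and each $\tilde\pi_y > 0$, normalising yields a probability vector $\pi$ still satisfying detailed balance; summing detailed balance over $x$ shows $\pi$ is stationary, hence the unique stationary distribution by ergodicity, and detailed balance with respect to it is exactly the statement $P^* = P$, i.e.\ reversibility.

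I expect the main obstacle to be the well-definedness of $\tilde\pi_y$: one must check carefully that the ``closed-loop'' form of the criterion -- phrased for arbitrary finite sequences of states -- genuinely covers the closed walks obtained by gluing two paths (including degenerate walks that revisit states or backtrack), and keep straight which orientation counts as ``forward'' when one path is traversed in reverse. The remaining steps are routine manipulations of products of transition probabilities.
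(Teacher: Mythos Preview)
Your proof is correct and follows the standard textbook argument for Kolmogorov's criterion. However, the paper does not actually prove this theorem: it is stated in the appendix as background material with a citation to Kelly's monograph~\cite{kelly_reversibility_2011}, and is then invoked as a black box to verify reversibility of the tree walk $P$. There is therefore no ``paper's own proof'' to compare against; you have supplied a full proof where the authors simply cite one.

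One minor remark on your argument: in establishing $p_{xy}>0 \Rightarrow p_{yx}>0$, you need not require the irreducibility path from $y$ to $x$ to avoid the edge $x\to y$ --- any positive-probability walk from $y$ to $x$ suffices, since concatenating with $x\to y$ gives a closed sequence whose forward product is positive and whose backward product contains the factor $p_{yx}$. The avoidance clause is harmless but unnecessary, and in some graphs no such avoiding path exists (e.g.\ a two-state chain), so it is cleaner to drop it.
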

A Markov chain $P$ is \emph{lazy} if the probability of remaining in any state is at least $1/2$. A lazy, reversible Markov chain has non-negative eigenvalues.
We can therefore write the eigenvalues of $P$ in descending order as $\lambda_1 = 1 > \lambda_2 \geq \ldots \lambda_n \geq 0$.

We can then define the \emph{eigenvalue gap} of the chain by $\delta(P) \coloneqq \abs{\lambda_1} - \abs{\lambda_2} = 1 - \lambda_2$.
The eigenvalue gap is closely related to the mixing time of a Markov chain. 
Indeed, we have lower and upper bounds for the mixing time:
\begin{theorem}[e.g.,~\cite{levin_markov_2006}]
    Let $P$ be a reversible Markov chain with stationary distribution $\pi$.
    Let $\pi_{\min} \coloneqq \min_{x \in \Omega} \pi_x$.
    Then,
    \begin{equation}
    \log(2) \left(\frac{1}{\delta(P)} - 1\right)
    \leq t_{\text{mix}} \leq
    \frac{1}{\delta(P)} \log(\frac{4}{\pi_{\min}}   ) .
    \end{equation}
\end{theorem}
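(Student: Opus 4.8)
The plan is to prove this through the standard $L^2$/spectral argument relating the relaxation time $1/\delta(P)$ to the total-variation mixing time. First I would fix conventions: take $P$ to be lazy (as it is in our application), so that all eigenvalues lie in $[0,1]$ and $\delta(P)=1-\lambda_2$ is genuinely the absolute spectral gap; reversibility then makes $P$ self-adjoint with respect to the inner product $\langle f,g\rangle_\pi=\sum_x\pi_x f(x)g(x)$, so there is an orthonormal eigenbasis $f_1\equiv\mathbf{1},f_2,\dots,f_{\lvert\Omega\rvert}$ with eigenvalues $1=\lambda_1>\lambda_2\ge\cdots\ge 0$. I also fix $t_{\text{mix}}=\min\{t:\max_x\lVert P^t(x,\cdot)-\pi\rVert_{\mathrm{TV}}\le 1/4\}$. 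Two elementary facts are used throughout: the completeness relation $\sum_j f_j(x)^2=1/\pi_x$, and $\delta\le -\log(1-\delta)\le\delta/(1-\delta)$ for $0<\delta<1$.

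For the upper bound, expand the transition kernel as $P^t(x,y)/\pi_y=\sum_j\lambda_j^t f_j(x)f_j(y)$ and subtract the $j=1$ term to get a clean bound on the $\chi^2$-divergence:
\[
\sum_y \pi_y\Bigl(\tfrac{P^t(x,y)}{\pi_y}-1\Bigr)^2=\sum_{j\ge 2}\lambda_j^{2t}f_j(x)^2\le \lambda_2^{2t}\sum_j f_j(x)^2=\frac{\lambda_2^{2t}}{\pi_x}\le\frac{(1-\delta)^{2t}}{\pi_{\min}}.
\]
Cauchy--Schwarz (in the form $4\lVert\mu-\pi\rVert_{\mathrm{TV}}^2\le\chi^2(\mu\,\|\,\pi)$) turns this into $4\,\lVert P^t(x,\cdot)-\pi\rVert_{\mathrm{TV}}^2\le(1-\delta)^{2t}/\pi_{\min}$; requiring the right-hand side to be at most $1/4$ and using $-\log(1-\delta)\ge\delta$ shows $t_{\text{mix}}\le\frac1\delta\log\frac{2}{\sqrt{\pi_{\min}}}\le\frac1\delta\log\frac{4}{\pi_{\min}}$, where the last step is $\log 2+\tfrac12\log\tfrac1{\pi_{\min}}\le\log\tfrac4{\pi_{\min}}$.

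For the lower bound, use the slowest mode directly: rescale the eigenfunction $f_2$ so that $\lVert f_2\rVert_\infty=1$, attained at some $x_0$. Since $\mathbb{E}_\pi f_2=0$,
\[
\lambda_2^{t}=\lvert\lambda_2^{t}f_2(x_0)\rvert=\Bigl\lvert\sum_y\bigl(P^t(x_0,y)-\pi_y\bigr)f_2(y)\Bigr\rvert\le 2\,\lVert P^t(x_0,\cdot)-\pi\rVert_{\mathrm{TV}}.
\]
Evaluating at $t=t_{\text{mix}}$ gives $\lambda_2^{t_{\text{mix}}}\le 1/2$, hence $t_{\text{mix}}\bigl(-\log(1-\delta)\bigr)\ge\log 2$, and $-\log(1-\delta)\le\delta/(1-\delta)$ then yields $t_{\text{mix}}\ge\log 2\,(1-\delta)/\delta=\log(2)\bigl(\tfrac1\delta-1\bigr)$.

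I expect the only real friction to be conventional rather than mathematical: one has to be careful that $\delta(P)=1-\lambda_2$ really controls mixing (this is exactly where laziness, or passing to the absolute spectral gap $1-\max_{j\ge 2}\lvert\lambda_j\rvert$, is needed), and that the constants $2$ and $4$ in the statement are precisely those produced by the cutoff $\varepsilon=1/4$ in the definition of $t_{\text{mix}}$. The inequalities $\chi^2\ge 4\,\mathrm{TV}^2$, $\sum_j f_j(x)^2=1/\pi_x$, and $\delta\le-\log(1-\delta)\le\delta/(1-\delta)$ do all the work; everything else is bookkeeping. Since this is a textbook result (it is essentially Theorems 12.3--12.4 of Levin--Peres--Wilmer), in the paper it is presumably just quoted rather than reproven.
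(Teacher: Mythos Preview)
Your proof is correct and is essentially the standard argument from Levin--Peres--Wilmer (Theorems~12.3--12.4), exactly as you suspected. The paper does not prove this theorem at all: it is stated with a citation to~\cite{levin_markov_2006} and used only as background motivation for why the eigenvalue gap is the right quantity to bound, so there is no proof in the paper to compare against.
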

While this is a tight bound, the eigenvalue gap can be prohibitively hard to compute in practice.
A common tool for bounding the eigenvalue gap is a property known as the \emph{conductance}.
Loosely, the conductance is a measure of how well connected the chain is.

We first define the \emph{ergodic flow} $Q(S,T)$ between subsets $S, T \subset \Omega$ by
$Q(S,\,T) \coloneqq \sum_{x \in S,\, y \in T} \pi_x p_{xy}$.
For a subset $S$, we also define
$\pi(S) \coloneqq \sum_{x \in S} \pi_x$ 
and the \emph{bottleneck ratio}
$\Phi(S) \coloneqq \frac{Q(S,\,S^c)}{\pi(S)}$.
The \emph{conductance} $\Phi$ is then given by:
\begin{equation}
    \Phi \coloneqq \min_{\substack{S \subset \Omega \\ \pi(S) \leq 1/2}} \Phi(S).
\end{equation}

A result of Alon and Milman~\cite{alon_1_1985} bounds the eigenvalue gap by the conductance.
\begin{theorem}[\cite{alon_1_1985}]\label{thm:conductance}
    For a lazy, reversible Markov chain $P$, the eigenvalue gap $\delta(P)$ satisfies:
    \begin{equation}
         \frac{\Phi^2}{2} \leq \delta(P) \leq 2\Phi.
    \end{equation}
\end{theorem}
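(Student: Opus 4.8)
The plan is to establish the two bounds separately, working throughout with the Dirichlet form
\[
\mathcal{E}(f,f) \coloneqq \frac{1}{2}\sum_{x,y\in\Omega}\pi_x p_{xy}\pars*{f(x)-f(y)}^2 = \langle f,(I-P)f\rangle_\pi ,
\]
where $\langle\cdot,\cdot\rangle_\pi$ is the $\pi$-weighted inner product and $\|\cdot\|_\pi$ the induced norm, together with the Courant--Fischer characterisation $\delta(P) = 1-\lambda_2 = \min\{\mathcal{E}(f,f)/\mathrm{Var}_\pi(f) : f \text{ non-constant}\}$, valid because $P$ is reversible (hence self-adjoint on $L^2(\pi)$) and lazy (so $\lambda_2\ge 0$ and $\delta(P)=1-\lambda_2$). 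The upper bound $\delta(P)\le 2\Phi$ is then immediate: for any $S$ with $0<\pi(S)\le 1/2$ take the test function $f=\mathbbm{1}_S$, for which $\mathcal{E}(\mathbbm{1}_S,\mathbbm{1}_S) = Q(S,S^c)$ and $\mathrm{Var}_\pi(\mathbbm{1}_S) = \pi(S)(1-\pi(S)) \ge \pi(S)/2$, so $\delta(P)\le 2Q(S,S^c)/\pi(S) = 2\Phi(S)$; minimising over $S$ finishes this direction.

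For the lower bound $\Phi^2/2\le\delta(P)$ I would proceed in two stages. \emph{Stage 1 (reduction to a non-negative function on a small set).} Let $f$ be a $\lambda_2$-eigenvector, so $\mathcal{E}(f,f)=\delta(P)\|f\|_\pi^2$ and $\langle f,\mathbbm{1}\rangle_\pi=0$. Choose $c$ to be a $\pi$-median of $f$ and put $h=f-c$, so that $\{h>0\}$ and $\{h<0\}$ each have $\pi$-measure $\le 1/2$. Adding a constant leaves the Dirichlet form unchanged while $\|h\|_\pi^2 = \|f\|_\pi^2 + c^2 \ge \|f\|_\pi^2$, hence $\mathcal{E}(h,h)\le\delta(P)\|h\|_\pi^2$. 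A short case analysis on the sign pattern of $(h(x),h(y))$ gives the pointwise bound $\pars*{h(x)-h(y)}^2 \ge \pars*{h_+(x)-h_+(y)}^2 + \pars*{h_-(x)-h_-(y)}^2$, so $\mathcal{E}(h_+,h_+)+\mathcal{E}(h_-,h_-)\le\mathcal{E}(h,h)$, while $\|h_+\|_\pi^2+\|h_-\|_\pi^2 = \|h\|_\pi^2$ since $h_+h_-\equiv 0$. By pigeonhole, one of $g\coloneqq h_+$ or $g\coloneqq h_-$ is a non-zero, non-negative function with $\pi(\{g>0\})\le 1/2$ and $\mathcal{E}(g,g)\le\delta(P)\|g\|_\pi^2$.

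\emph{Stage 2 (the Cheeger estimate for $g$).} I would then estimate $A\coloneqq\sum_{x,y}\pi_x p_{xy}\abs{g(x)^2-g(y)^2}$ from above and below. From above, factor $\abs{g(x)^2-g(y)^2} = \abs{g(x)-g(y)}\pars*{g(x)+g(y)}$ (using $g\ge 0$) and apply Cauchy--Schwarz against the measure $\pi_x p_{xy}$; since $\sum_{x,y}\pi_xp_{xy}\pars*{g(x)+g(y)}^2\le 4\|g\|_\pi^2$ (expand and use $\sum_y p_{xy}=1$ and $\sum_x\pi_xp_{xy}=\pi_y$), this yields $A\le 2\sqrt{2}\,\sqrt{\mathcal{E}(g,g)}\,\|g\|_\pi$. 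From below, a layer-cake argument rewrites $\abs{g(x)^2-g(y)^2}$ as an integral over the level sets $S_u=\{g>u\}$, each of $\pi$-measure $\le 1/2$ so that $Q(S_u,S_u^c)\ge\Phi\,\pi(S_u)$; after the change of variable $t = u^2$ and the identity $\int_0^\infty \pi(S_u)\,u\,du = \frac{1}{2}\|g\|_\pi^2$ this gives $A\ge 2\Phi\|g\|_\pi^2$. Combining the two estimates yields $\Phi^2\|g\|_\pi^2\le 2\,\mathcal{E}(g,g)$, i.e. $\mathcal{E}(g,g)/\|g\|_\pi^2\ge\Phi^2/2$, and with Stage 1 we conclude $\Phi^2/2 \le \mathcal{E}(g,g)/\|g\|_\pi^2 \le \delta(P)$.

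The main obstacle is Stage 1: extracting from the $\lambda_2$-eigenvector a \emph{non-negative} test function supported on a set of $\pi$-measure at most $1/2$ whose Rayleigh quotient still does not exceed $\delta(P)$. The median shift, the pointwise sub-additivity of $\pars*{h(x)-h(y)}^2$ under the decomposition $h=h_+-h_-$, and the pigeonhole step that picks the good half all require some care (and use reversibility via self-adjointness and via $Q(A,A^c)=Q(A^c,A)$, and laziness via $\lambda_2\ge 0$). Once a suitable $g$ is in hand, the Cauchy--Schwarz and co-area computation of Stage 2 is essentially mechanical, as is the indicator-test-function argument for the upper bound.
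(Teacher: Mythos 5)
The paper does not prove \Cref{thm:conductance}: it is stated as a cited result from Alon and Milman~\cite{alon_1_1985} and used only as a black box to deduce \Cref{thm:eigenvalue_gap}. Your argument is a correct, self-contained proof by the now-standard Dirichlet-form route (variational characterisation of $1-\lambda_2$, indicator test functions for $\delta(P)\leq 2\Phi$, then median shift, positive/negative-part decomposition, Cauchy--Schwarz and co-area for $\Phi^2/2\leq\delta(P)$), and the bookkeeping works out to exactly the stated constants. Two small points worth making explicit if you write this up: first, the characterisation $\delta(P)=1-\lambda_2=\min_f \mathcal{E}(f,f)/\mathrm{Var}_\pi(f)$ uses ergodicity to ensure the eigenvalue $1$ is simple with the constants as eigenspace (the theorem statement says ``lazy, reversible'' but the paper's framework is ergodic throughout, so this is consistent); second, laziness enters only to guarantee $\lambda_2\geq 0$ so that the paper's $\delta(P)\coloneqq\abs{\lambda_1}-\abs{\lambda_2}$ equals $1-\lambda_2$ -- the Dirichlet-form bound itself is on $1-\lambda_2$ and does not otherwise use laziness. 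Both are clear from context but easy to gloss over.
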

The conductance remains hard to compute in many cases. We therefore require one further bound, via the \emph{congestion} $\rho$.

For any pair of states $x,\,y$, define a canonical path $\gamma_{xy} = (x = z_0,\,z_1,\,\ldots,\,z_{k-1},\,z_k = y)$.
The canonical path need not be the shortest path between the 2 points, but that will suffice for our purposes.

Let $\Gamma = \{\gamma_{xy} : x,\,y \in X \}$ be the set of all canonical paths.
The congestion is given by:
\begin{equation}
\label{eq:app_congestion}
    \rho = \rho(\Gamma) = \max_{(u,\,v) \in E} \left\{
\frac{1}{\pi_u p_{uv}} \sum_{\substack{x,\,y \in \Omega: \\ u,\,v \in \gamma_{xy}}} \pi_x \pi_y
\right\}.
\end{equation}
The $\pi_u p_{uv}$ term is intuitively the ``capacity'' of the edge $(u,\,v)$ in the graph, in that it relates how much traffic the edge sees in the stationary distribution.
The sum then counts the ``load'' the edge experiences within the family of canonical paths.
The congestion is then the maximum load of any edge, normalised by its capacity.
As expected, high congestion corresponds to low conductance:
\begin{theorem}[e.g.,~\cite{mcnew_eigenvalue_2011}]
    \label{thm:congestion}
    For any reversible Markov chain and any choice of canonical paths $\Gamma$:
    \begin{equation}
        \Phi \geq \frac{1}{2 \rho(\Gamma)}.
    \end{equation}
\end{theorem}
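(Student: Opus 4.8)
The plan is to prove the contrapositive-free direction directly: fix an arbitrary cut $S \subset \Omega$ with $\pi(S) \le 1/2$ and show that its bottleneck ratio already satisfies $\Phi(S) = Q(S,S^c)/\pi(S) \ge 1/(2\rho(\Gamma))$; since $\Phi$ is by definition the minimum of $\Phi(S)$ over all such $S$, the theorem follows immediately. So the whole argument reduces to lower-bounding the ergodic flow $Q(S,S^c)$ in terms of $\pi(S)$ and the congestion.

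The first — and really the only substantive — step is the following routing observation. For every ordered pair $(x,y)$ with $x\in S$ and $y\in S^c$, the canonical path $\gamma_{xy}$ starts in $S$ and ends in $S^c$, hence must traverse at least one directed \emph{cut edge} $(u,v)\in E$ with $u\in S$ and $v\in S^c$. Summing the weights $\pi_x\pi_y$ of all such pairs and charging each pair to the cut edges on its path, I would obtain
\[
\sum_{x\in S,\ y\in S^c}\pi_x\pi_y
\;\le\;
\sum_{\substack{(u,v)\in E\\ u\in S,\ v\in S^c}}\ \sum_{\substack{x,y\in\Omega\\ (u,v)\in\gamma_{xy}}}\pi_x\pi_y .
\]
Here the left-hand side is exactly $\pi(S)\,\pi(S^c)$. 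The inequality is the point that needs a moment of care: a path that crosses the cut several times is counted once for each crossing, so the right-hand side may strictly overcount — but that is harmless, since we only need a \emph{lower} bound on the cut traffic.

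The remaining steps are routine bookkeeping. Bounding each inner sum by the definition of congestion (\Cref{eq:app_congestion}), namely $\sum_{x,y:\,(u,v)\in\gamma_{xy}}\pi_x\pi_y \le \rho(\Gamma)\,\pi_u p_{uv}$, and recognising $\sum_{u\in S,\,v\in S^c}\pi_u p_{uv} = Q(S,S^c)$, the display becomes $\pi(S)\,\pi(S^c) \le \rho(\Gamma)\,Q(S,S^c)$. Finally, because $\pi(S)\le 1/2$ we have $\pi(S^c)=1-\pi(S)\ge 1/2$, so $\tfrac12\pi(S)\le \pi(S)\pi(S^c)\le \rho(\Gamma)Q(S,S^c)$, i.e. $\Phi(S)\ge 1/(2\rho(\Gamma))$; taking the minimum over admissible $S$ gives $\Phi\ge 1/(2\rho(\Gamma))$.

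I do not anticipate a genuine obstacle: the delicate point is purely the overcounting in the routing step, which only helps the inequality, and the factor $2$ is forced by the $\pi(S)\le 1/2$ normalisation in the definition of $\Phi$. (I note that reversibility of $P$ is not strictly needed for this particular inequality; it is kept as a hypothesis because it is exactly the condition under which $\Phi$, hence $\rho$, controls the eigenvalue gap $\delta(P)$ through \Cref{thm:conductance}, which is how these bounds are ultimately used.)
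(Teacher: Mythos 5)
Your proposal is correct, and it is precisely the standard canonical-paths (Sinclair--Jerrum) argument that underlies the cited result: the paper itself states \Cref{thm:congestion} without proof, deferring to the reference, and your routing step, the charge of each crossing pair $(x,y)$ to a cut edge, the bound by $\rho(\Gamma)\,\pi_u p_{uv}$, and the factor $2$ coming from $\pi(S^c)\ge 1/2$ are all exactly as in the standard derivation. Your parenthetical observation that reversibility is not needed for this particular inequality (only for the spectral consequence via \Cref{thm:conductance}) is also accurate.
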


Combining~\Cref{thm:conductance}~\&~\Cref{thm:congestion}, we may trivially obtain:

\begin{theorem}\label{thm:eigenvalue_gap}
    For a lazy, reversible Markov chain $P$ and any choice of canonical paths $\Gamma$, the eigenvalue gap $\delta(P)$ is bounded by:
    \begin{equation}
        \delta(P) \geq \frac{1}{8 \rho(\Gamma)^2}.
    \end{equation}
\end{theorem}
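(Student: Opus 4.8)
The plan is to simply chain the two preceding bounds, taking care only that all quantities involved are non-negative so that squaring preserves the inequality. First I would invoke~\Cref{thm:congestion}, which applies since $P$ is reversible: for the given family of canonical paths $\Gamma$ we have $\Phi \geq \tfrac{1}{2\rho(\Gamma)}$. Both sides are non-negative (the conductance $\Phi$ is a minimum of ratios of non-negative quantities, and $\rho(\Gamma) > 0$), so squaring is legitimate and yields $\Phi^2 \geq \tfrac{1}{4\rho(\Gamma)^2}$.

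Next I would apply the lower bound from~\Cref{thm:conductance} (Alon--Milman), valid because $P$ is lazy and reversible, which gives $\delta(P) \geq \tfrac{\Phi^2}{2}$. Combining the two displays,
\begin{equation}
    \delta(P) \geq \frac{\Phi^2}{2} \geq \frac{1}{2}\cdot\frac{1}{4\rho(\Gamma)^2} = \frac{1}{8\rho(\Gamma)^2},
\end{equation}
which is the claimed bound.

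There is no real obstacle here: the statement is a one-line corollary, and the only thing worth checking is that the hypotheses ``lazy'' and ``reversible'' needed by~\Cref{thm:conductance} and the hypothesis ``reversible'' needed by~\Cref{thm:congestion} are both subsumed by the hypotheses of the present theorem, and that the monotonicity of $t \mapsto t^2$ on $[0,\infty)$ justifies the squaring step. Both are immediate, so the proof is complete.
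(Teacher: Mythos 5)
Your proposal is correct and matches the paper exactly: the paper derives this result as a ``trivial'' combination of~\Cref{thm:conductance} and~\Cref{thm:congestion}, which is precisely the chaining of bounds (with the squaring step) that you carry out. No further comment is needed.
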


\subsection{A Quantum Search Algorithm}
\label{app:q_search_algo}
In this~\namecref{app:q_search_algo}, we briefly summarise the functioning of the quantum search algorithm due to Magniez et al~\cite{magniez_search_2011}.

Having prepared a state $\ket{\pi}$ that encodes the stationary distribution (with cost $S$), the goal is to prepare the normalised projection of $\ket{\pi}$ onto the marked subspace, which we call $\ket{\mu}$.
The standard, Grover-esque method is to implement a rotation in the 2D space $\mathcal{S} \coloneqq \text{span}(\ket{\pi}, \ket{\mu})$.
As usual, this rotation is effected by composing 2 reflections.

The first reflection is about a state $\ket{\mu^\perp}$, a state orthogonal to $\ket{\mu}$. 
In $\mathcal{S}$, this is equivalent (up to a phase) to reflecting about the marked subspace.
This can be done at the cost of checking whether an element is marked, accruing a complexity of $C$.

The second reflection is about the state $\ket{\pi}$.
This could be done by rotating $\ket{\pi}$ to a fixed reference state (e.g. $\ket{0}$), reflecting about that state, and then rotating back.
This induces a cost of $2(S+U)$, since it requires the operator used to prepare $\ket{\pi}$ and its inverse.
In general, $S$ is costly and should be avoided.
Instead, the authors propose to use a quantum walk operator that can be implemented in a cost of only $4U$ combined with a phase estimation procedure.
Since the chain is known to have a unique eigenvalue $1$ corresponding to the eigenvector $\pi$, the phase estimation algorithm can distinguish between parts of the state aligned with or orthogonal to $\ket{\pi}$. 

By deriving a quadratic relation between the eigenvalue gap of the chain $P$ and the gap in the phase of the quantum walk, it can be shown that only $\bigo{1/\sqrt{\delta}}$ calls to the walk are needed.
The $\bigo{1/\sqrt{\epsilon}}$ scaling follows from a technical requirement to use a randomized Grover's algorithm~\cite{boyer_tight_1998}, since the probability of a marked element is not known \emph{a priori}.

\section{A Markov Chain on a Perfect Tree}
\label{app:markov}
We first restate the definition of $P$ from the main text.

Let $\mathcal{T}$ be a perfect $(d+1)$-ary tree with depth $d$.
Let $\Omega$ be the set of nodes of $\mathcal{T}$, given by:
\begin{equation}
    \Omega = \{ \mathbf{j} = (j_1,\ldots,j_l, 0,\ldots, 0) \in \{0,\ldots,d+1\}^{d} : 0 \leq l \leq d; 1 \leq j_i \leq d \}.
\end{equation}

The transition probabilities of $P$ are:
\begin{align}
\text{If $\mathbf{j}$ is neither root nor leaf,} \quad
    p_{\mathbf{jj'}} &= \begin{cases}
    \frac{1}{2} & \text{if } \mathbf{j'} = \mathbf{j}, \\
    \frac{1}{4(d+1)} & \text{if $\mathbf{j'}$ is a child of $\mathbf{j}$}, \\  
    \frac{1}{4} & \text{if $\mathbf{j'}$ is the parent of $\mathbf{j}$}, \\
    0 & \text{otherwise.}
\end{cases} \\
\text{If $\mathbf{j}$ is a leaf,} \quad
p_{\mathbf{jj'}} &= \begin{cases}
    \frac{1}{2} & \text{if } \mathbf{j'} = \mathbf{j}, \\
    \frac{1}{2} & \text{if $\mathbf{j'}$ is the parent of $\mathbf{j}$}, \\
    0 & \text{otherwise.}
\end{cases} \\
\text{If $\mathbf{j}$ is the root,} \quad
p_{\mathbf{jj'}} &= \begin{cases}
    \frac{1}{2} & \text{if } \mathbf{j'} = \mathbf{j}, \\
    \frac{1}{2(d+1)} & \text{if $\mathbf{j'}$ is a child of $\mathbf{j}$}, \\
    0 & \text{otherwise.}
\end{cases}
\end{align}
The set of edges of $P$ is $E = \{(\mathbf{j}, \mathbf{j'}) \in \Omega \times \Omega: j_i = j'_i, i = 1,\ldots, l; j_{l+1} = 0; l = d-1 \lor j'_{l+2} = 0 \}$.

$P$ is easily seen to be lazy and ergodic.
Using Kolomogorov's Criterion~(\Cref{thm:kolmogorov}), $P$ is also reversible due to the following.
All closed loops must consist of a path that doubles back on itself, since the in-degree of each vertex is 1.
Therefore, for any transition probability $p_{\mathbf{jj'}}$ that appears in the forward direction, there must also be a corresponding $p_{\mathbf{j'j}}$.
Since the backward direction is formed by swapping the direction of each transition in the forward direction, simply interchanging each of the above pairs of terms results in the backward direction matching the forward, as required. 

To compute the stationary distribution, we use the symmetry of the walk to infer that for any nodes $\mathbf{j},\mathbf{j'}$ in the same layer, $\pi_{\mathbf{j}} = \pi_\mathbf{j'}$.
It is therefore useful to consider a simpler Markov chain $Q$ defined on the layers of the tree, with state space $\chi = \{l : 0 \leq l \leq d\}$.
$Q$ is coupled to $P$ such that if $P$ is in state $\mathbf{j}$ in level $l$, then $Q$ is in state $l$.
If the stationary distribution of $Q$ is $\tilde{\pi}$, then $\pi$ can be recovered by distributing the probability mass $\tilde{\pi}_l$ equally among all the nodes at layer $l$.

The transition probabilities of $Q$ are given by the tri-diagonal matrix:
\begin{equation}
    Q = \begin{bmatrix}
    \frac{1}{2} & \frac{1}{2} & \\
    \frac{1}{4} & \frac{1}{2} & \frac{1}{4} & \\
    & \frac{1}{4} & \frac{1}{2} & \frac{1}{4} & \\
    & & & \ddots & \\
    & & & & \frac{1}{4} & \frac{1}{2} & \frac{1}{4} & \\
    & & & & & \frac{1}{2} & \frac{1}{2}
\end{bmatrix}.
\end{equation}
We may further simplify by writing $Q = \frac{1}{2}(I + \tilde{Q})$, and noting that $\tilde{\pi} Q = \tilde{\pi} \Leftrightarrow \tilde{\pi} \tilde{Q} = \tilde{\pi}$.
$\tilde{Q}$ defines a symmetric random walk with reflecting barriers.
A direct computation yields:
\begin{equation}
\label{eq:stationary_distribution}
    \tilde{\pi}_l = \begin{cases}
        \frac{1}{2d} & \text{if } l = 0, \ d, \\
        \frac{1}{d} & \text{if } l = 1,\ldots, d-1.
    \end{cases}
\end{equation}
For any pair of nodes $\mathbf{j}, \mathbf{j'}$, take the canonical bath $\gamma_{\mathbf{jj'}}$ to be the shortest path from $\mathbf{j}$ to $\mathbf{j'}$.
We may now compute the congestion of $P$, which is given by~\Cref{eq:congestion}:
\begin{equation}
\label{eq:congestion}
    \rho = \max_{(\mathbf{j}, \mathbf{j}') \in E} \left\{
    \rho_{\mathbf{jj'}} \coloneqq
\frac{1}{\pi_\mathbf{j} p_{\mathbf{jj'}}} \sum_{\substack{\mathbf{k},\mathbf{k'} \in \Omega: \\ \mathbf{j},\,\mathbf{j'} \in \gamma_{\mathbf{kk'}}}} \pi_\mathbf{k} \pi_\mathbf{k'}
\right\}.
\end{equation}
For a self-edge $(\mathbf{j}, \mathbf{j})$, the congestion $\rho_{\mathbf{jj}}$ is 0 since no canonical path goes through any self-edges.

For an edge between distinct nodes $\mathbf{j}, \mathbf{j'}$, assume $\mathbf{j}$ is in layer $l$ and $\mathbf{j'}$ is in layer $l+1$.
Then $\mathbf{j},\mathbf{j'} \in \gamma_{\mathbf{kk'}}$ if and only if $\mathbf{k}$ equals or is a descendent of $\mathbf{j'}$ and $\mathbf{k'}$ is not a descendant of $\mathbf{j'}$, or vice versa.

For $\mathbf{j'},\,\mathbf{k} \in \Omega$, write $\mathbf{k} \preceq \mathbf{j'}$ if $\mathbf{k} = \mathbf{j'}$ or $\mathbf{k}$ is a descendant of $\mathbf{j'}$.
Then, for $\mathbf{j}$ not the root, we have:
\begin{align}
    \rho_{\mathbf{jj'}} &= 
    \frac{1}{\pi_\mathbf{j} p_{\mathbf{jj'}}} 
    \left(
    \sum_{\substack{
    \mathbf{k}: \mathbf{k} \preceq \mathbf{j'}\\
    \mathbf{k'}: \mathbf{k'} \npreceq \mathbf{j'}
    }} 
    \pi_\mathbf{k} \pi_\mathbf{k'}
    +
    \sum_{\substack{
    \mathbf{k'}: \mathbf{k'} \preceq \mathbf{j'}\\
    \mathbf{k}: \mathbf{k} \npreceq \mathbf{j'}
    }} 
    \pi_\mathbf{k} \pi_\mathbf{k'}
    \right)\\
    &= 
    d(d+1)^l \cdot 4(d+1) \cdot 2 
    \sum_{\mathbf{k}: \mathbf{k} \preceq \mathbf{j'}}\pi_\mathbf{k}
    \sum_{\mathbf{k'}: \mathbf{k'} \npreceq \mathbf{j'}}\pi_\mathbf{k'}.
\end{align}
For $\mathbf{j} = \mathbf{0}$, the root, we have a slightly modified expression, since $\pi_\mathbf{0}$ is reduced by a factor of 2 and $p_{\mathbf{0j'}}$ is increased by a factor of 2:
\begin{align}
    \rho_{\mathbf{0j'}} &= 
    2d(d+1)^l \cdot 2(d+1) \cdot 2 
    \sum_{\mathbf{k}: \mathbf{k} \preceq \mathbf{j'}}\pi_\mathbf{k}
    \sum_{\mathbf{k'}: \mathbf{k'} \npreceq \mathbf{j'}}\pi_\mathbf{k'}.
\end{align}
Clearly these modifications cancel, and we may treat the expressions on the same footing.
Then we may compute the sums:
\begin{align}
    \sum_{\mathbf{k}: \mathbf{k} \preceq \mathbf{j'}}\pi_\mathbf{k}
    &=
    \frac{\tilde{\pi}_{l+1}}{(d+1)^{l+1}}
    + \frac{(d + 1)\tilde{\pi}_{l+2}}{(d+1)^{l+2}}
    + \frac{(d + 1)^2\tilde{\pi}_{l+3}}{(d+1)^{l+3}} + \ldots +
    \frac{(d + 1)^{d-l-1}\tilde{\pi}_{d}}{(d+1)^{d}}\\
    &=
    \frac{1}{d(d+1)^{l+1}}\left(
        (d-l-1) + \frac{1}{2}
    \right)\\
    &= \frac{2(d-l)-1}{2d(d+1)^{l+1}}.\\
    \sum_{\mathbf{k'}: \mathbf{k'} \npreceq \mathbf{j'}}\pi_\mathbf{k'}
    &=
    1 - \sum_{\mathbf{k}: \mathbf{k} \preceq \mathbf{j'}}\pi_\mathbf{k}\\
    &=
    \frac{2d(d+1)^{l+1} - 2(d-l) +1}{2d(d+1)^{l+1}}.
\end{align}
Noting that these expressions depend only $l$ and $d$, we may write $\rho_l$ for the congestion of an edge from layer $l$ to layer $l+1$.
Finally, we have:
\begin{align}
    \rho_l &= 8d(d+1)^{l+1}\cdot \frac{
    \bigl(2(d-l)-1\bigr) \bigl(2d(d+1)^{l+1} - 2(d-l) +1 \bigr)
    }{4d^2(d+1)^{2l+2}}\\
    &=
    \frac{2\bigl(2(d-l)-1\bigr) \bigl(2d(d+1)^{l+1} - 2(d-l) +1 \bigr)}{d(d+1)^{l+1}}.
\end{align}
This is maximised for $l = 0$, where the congestion takes the value
\begin{align}
    \rho_0 = \frac{2(2d-1)(2d^2+1)}{d(d+1)} < 8d.
\end{align}
By~\Cref{thm:eigenvalue_gap}, the eigenvalue gap $\delta(P)$ of $P$ is bounded by
\begin{equation}
    \delta(P) \geq \frac{1}{8\rho^2} > \frac{1}{2^9 d^2}.
\end{equation}
\end{document}